\theoremstyle{plain}
\newtheorem{theorem}{Theorem}[section]
\newtheorem{lemma}[theorem]{Lemma}
\theoremstyle{remark}
\newtheorem{remark}[theorem]{Remark}
\numberwithin{equation}{section}
\newcommand{\du}{\mathrm{d}}
\DeclareMathOperator{\ind}{ind}
\DeclareMathOperator{\Trace}{\mathbf{Trace}}
\title[Singularities and eigenparameter dependent boundary conditions]{Inverse square singularities and \\ eigenparameter dependent boundary conditions \\ are two sides of the same coin}
\author{Namig J. Guliyev}
\address{Institute of Mathematics and Mechanics, Azerbaijan National Academy of Sciences, 9 B.~Vahabzadeh str., AZ1141, Baku, Azerbaijan.}
\email{njguliyev@gmail.com}
\subjclass[2010]{34A25, 34A55, 34B07, 34B24, 34B30, 34C10, 34L20, 34L40, 37K35, 47A75, 47E05}
\keywords{Darboux transformation, one-dimensional Schr\"{o}dinger equation, Sturm--Liouville operator, Bessel operator, inverse square singularity, boundary conditions dependent on the eigenvalue parameter, asymptotics, oscillation, inverse problems, regularized trace}
\begin{document}
\begin{abstract}
We show that inverse square singularities can be treated as boundary conditions containing rational Herglotz--Nevanlinna functions of the eigenvalue parameter with ``a negative number of poles''. More precisely, we treat in a unified manner one-dimensional Schr\"{o}dinger operators with either an inverse square singularity or a boundary condition containing a rational Herglotz--Nevanlinna function of the eigenvalue parameter at each endpoint, and define Darboux-type transformations between such operators. These transformations allow one, in particular, to transfer almost any spectral result from boundary value problems with eigenparameter dependent boundary conditions to those with inverse square singularities, and vice versa.
\end{abstract}
\maketitle

\tableofcontents

\section{Introduction} \label{sec:introduction}

One-dimensional Schr\"{o}dinger operators with summable potentials and constant (Dirichlet or Robin) boundary conditions are among the most comprehensively studied operators in spectral theory \cite{M77}, \cite{PT87}. However, potentials arising in physical applications usually have nonintegrable singularities of various kind. For instance, the potential of the Schr\"{o}dinger equation corresponding to one of the most basic models of quantum mechanics, namely that of the nonrelativistic hydrogen atom, is represented as the sum of two nonintegrable terms, one of which is a multiple of $x^{-1}$, while the other is of the form $\ell (\ell + 1) x^{-2}$, where the orbital angular momentum quantum number $\ell$ is a nonnegative integer~\cite[Section 39]{D67}. Potentials with singularities of the former kind can be studied just like summable ones~\cite{G19a} by introducing the so-called quasi-derivatives~\cite{AEZ88}. Inverse square singularities (also known as Bessel-type singularities) of the above form, however, also arise from other radial Schr\"{o}dinger operators acting in odd-dimensional spaces by separation of variables, and require a special treatment. We refer the reader to~\cite{KST10} and the references therein for various direct and inverse spectral results concerning inverse square singularities.

On the other hand, eigenvalue problems with boundary conditions dependent on the eigenvalue parameter have found numerous applications in various fields of science and technology. Fulton's well-known papers \cite{F77}, \cite{F80} contain a fairly comprehensive list of references up to 1980 and some recent examples of their use in modern physics can be found in~\cite{G19c}. We refer to our recent papers \cite{G17}, \cite{G19a}, and the references therein for direct and inverse spectral theory of one-dimensional Schr\"{o}dinger operators with eigenparameter dependent boundary conditions.

From an operator-theoretic point of view, a boundary value problem with boundary conditions containing (not necessarily rational) Herglotz--Nevanlinna functions of the eigenvalue parameter describes a \emph{self-adjoint extension with exit} of the minimal differential operator generated by the one-dimensional Schr\"{o}dinger equation in $\mathscr{L}_2$, i.e. a self-adjoint operator in a larger Hilbert space $\mathfrak{H}$ containing $\mathscr{L}_2$ as a closed subspace~\cite{G18b}. Problems with boundary conditions containing rational Herglotz--Nevanlinna functions then correspond to extensions with finite-dimensional \emph{exit spaces} (i.e., $\dim \mathfrak{H} \ominus \mathscr{L}_2 < \infty$). Such extensions have recently been studied in \cite{DL17}, \cite{DL18}.

In our recent paper \cite{G17}, we defined Darboux-type transformations between boundary value problems with boundary conditions containing rational Herglotz--Nevanlinna functions of the eigenvalue parameter, and obtained various direct and inverse spectral results for such problems by using these transformations. For problems with a boundary condition dependent on the eigenvalue parameter (respectively, an inverse square singularity) at one endpoint and a Robin or Dirichlet condition at the other, a similar approach was used earlier by Binding, Browne and Watson \cite{BBW02a}, \cite{BBW02b} (respectively, by Carlson~\cite{C93}).

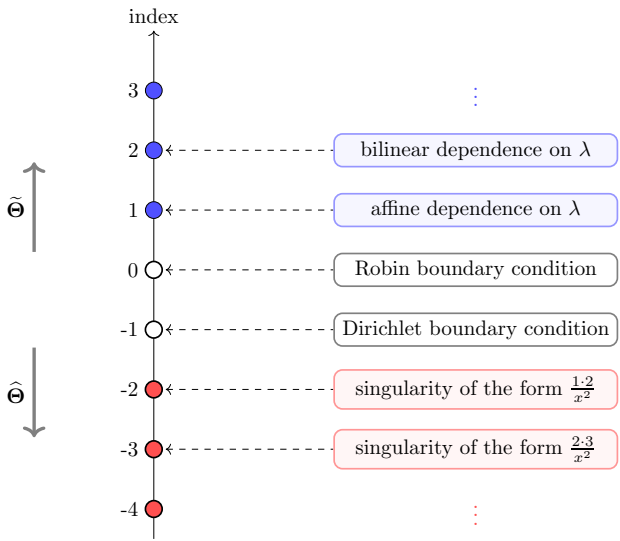
\begin{figure}[ht]
  \resizebox{0.65\textwidth}{!}{\begin{tikzpicture}
    \definecolor{singular}{RGB}{255,80,80}
    \definecolor{constant}{RGB}{255,255,255}
    \definecolor{dependent}{RGB}{80,80,255}
    \pgfmathsetmacro\dotradius{0.14}

    \draw[ultra thick, gray, ->] (-2,0.3) -- node[black, left] {$\widetilde{\boldsymbol{\Theta}}$}(-2,1.8);

    \draw[ultra thick, gray, ->] (-2,-1.3) -- node[black, left] {$\widehat{\boldsymbol{\Theta}}$}(-2,-2.8);

    \draw[->] (0,-4.5) -- (0,4) node[above] {index};

    \filldraw[fill=dependent, draw=black] (0,3) circle (\dotradius) node[left, inner sep=7pt] {3};
    \filldraw[fill=dependent, draw=black] (0,2) circle (\dotradius) node[left, inner sep=7pt] {2};
    \filldraw[fill=dependent, draw=black] (0,1) circle (\dotradius) node[left, inner sep=7pt] {1};
    \filldraw[fill=constant, draw=black, thick] (0,0) circle (\dotradius) node[left, inner sep=7pt] {0};
    \filldraw[fill=constant, draw=black, thick] (0,-1) circle (\dotradius) node[left, inner sep=7pt] {-1};
    \filldraw[fill=singular, draw=black, thick] (0,-2) circle (\dotradius) node[left, inner sep=7pt] {-2};
    \filldraw[fill=singular, draw=black, thick] (0,-3) circle (\dotradius) node[left, inner sep=7pt] {-3};
    \filldraw[fill=singular, draw=black, thick] (0,-4) circle (\dotradius) node[left, inner sep=7pt] {-4};

    \tikzset{
      every node/.style={rectangle, rounded corners, solid, right, align=center, text width=4.5cm, thick},
      singularnode/.style={draw=singular!60, fill=singular!5},
      constantnode/.style={draw=gray},
      dependentnode/.style={draw=dependent!60, fill=dependent!5},
    }

    \node[dependent] at (3,3) {$\vdots$};
    \draw[dashed, <-] (0.2,2) to [out=0,in=180] (3,2) node[dependentnode] {bilinear dependence on $\lambda$};
    \draw[dashed, <-] (0.2,1) to [out=0,in=180] (3,1) node[dependentnode] {affine dependence on $\lambda$};
    \draw[dashed, <-] (0.2,0) to [out=0,in=180] (3,0) node[constantnode] {Robin boundary condition};
    \draw[dashed, <-] (0.2,-1) to [out=0,in=180] (3,-1) node[constantnode] {Dirichlet boundary condition};
    \draw[dashed, <-] (0.2,-2) to [out=0,in=180] (3,-2) node[singularnode] {singularity of the form $\frac{1 \cdot 2}{x^2}$};
    \draw[dashed, <-] (0.2,-3) to [out=0,in=180] (3,-3) node[singularnode] {singularity of the form $\frac{2 \cdot 3}{x^2}$};
    \node[singular] at (3,-4) {$\vdots$};
  \end{tikzpicture}}
  \caption{Singularities, boundary conditions, and their indices}
  \label{fig:index}
\end{figure}

The purpose of this paper is to demonstrate that the framework developed in \cite{G17} can be naturally extended to include inverse square singularities. This is by no means unexpected, since the fact that constant boundary conditions are related with eigenparameter dependent boundary conditions and inverse square singularities via Darboux transformations was already observed by Churchill~\cite{C42} and Crum~\cite{C55} (see also Krein~\cite{K57}) respectively. In~\cite{G17}, to each eigenparameter dependent or independent boundary condition we assigned its \emph{index}, an integer not smaller than $-1$, and two other real numbers, which allowed us to formulate various spectral results in a unified manner. What is perhaps more surprising is that after assigning the index with values smaller than $-1$ to inverse square singularities (see Figure~\ref{fig:index}) and defining the other two numbers in a proper way (see $\omega_1$ and $\omega_2$ in (\ref{eq:omega_Omega}) below), the results obtained in~\cite{G17} either remain intact in our more general setting or can easily be generalized (see Subsection~\ref{ss:trace} for the latter case). Moreover, the index of the inverse square singularity with orbital quantum number $\ell$ under this assignment simply turns out to be equal to $-\ell - 1$, the only value of $\kappa$ other than $\ell$ for which $\kappa (\kappa + 1) = \ell (\ell + 1)$. On the other hand, the index of a rational Herglotz--Nevanlinna function counts the number of its poles: each finite pole is counted twice and a pole at infinity (if any) once. It is in this sense that inverse square singularities behave as boundary conditions containing rational Herglotz--Nevanlinna functions of the eigenvalue parameter with ``a negative number of poles''.

The (Darboux) transformation method used in this paper is also known as the \emph{single commutation method} in contradistinction to the \emph{double commutation method}~\cite{GT96}, which was also used for studying problems with an inverse square singularity at an endpoint \cite{G65} (see also~\cite{AHM07}). In terms of their effect on the spectral measure---which in our case is just the sum $\sum \gamma_n^{-1} \delta_{\lambda_n}(t)$ of the Dirac measures supported at the eigenvalues, with the \emph{norming constants} $\gamma_n$ being defined in Subsection~\ref{ss:hilbert}---the former method amounts to multiplying this measure by the factor $t - \lambda$ (cf. Theorem~\ref{thm:transformation}), whereas the latter one to adding a weighted Dirac measure supported at $\lambda$ to it~\cite{KST12}. The double commutation method, however, appears to be not well suited for studying problems with eigenparameter dependent boundary conditions or inverse square singularities at both endpoints.

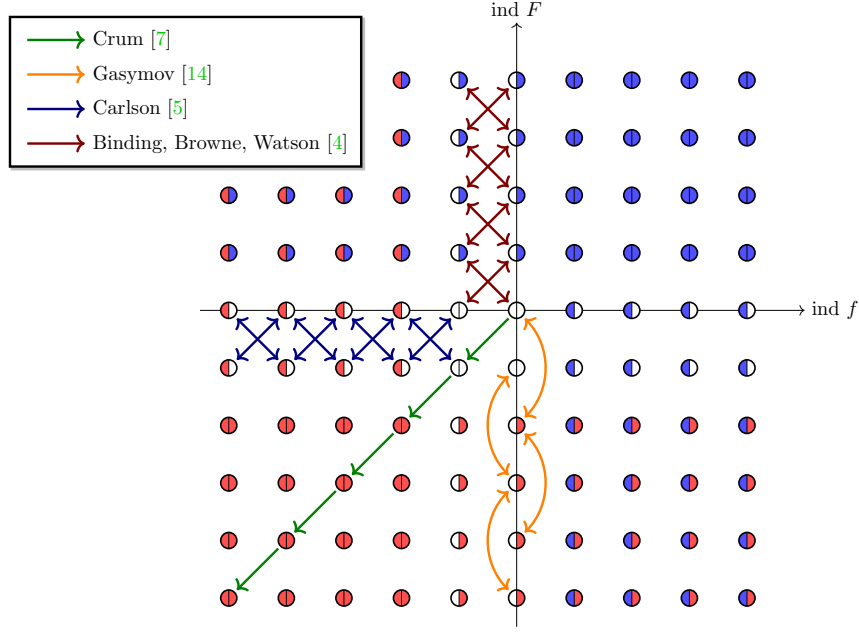
\begin{figure}[ht]
  \resizebox{0.9\textwidth}{!}{\begin{tikzpicture}
    \definecolor{singular}{RGB}{255,80,80}
    \definecolor{constant}{RGB}{255,255,255}
    \definecolor{dependent}{RGB}{80,80,255}
    \definecolor{crum}{RGB}{0,128,0}
    \definecolor{gasymov}{RGB}{255,128,0}
    \definecolor{carlson}{RGB}{0,0,128}
    \definecolor{binding}{RGB}{128,0,0}
    \pgfmathsetmacro\latticesize{4}
    \pgfmathsetmacro\dotradius{0.14}

    \draw[->] (-\latticesize-1.5,0) -- (\latticesize+1,0) node[right] {ind $f$};
    \draw[->] (0,-\latticesize-1.5) -- (0,\latticesize+1) node[above] {ind $F$};

    \foreach \y in {-5,...,\latticesize} {
      \foreach \x in {-5,...,-2} {
        \fill[singular] (\x,\y+\dotradius) arc (90:270:\dotradius);
        \fill[singular] (\y,\x-\dotradius) arc (270:450:\dotradius);}
      \foreach \x in {-1,0} {
        \fill[constant] (\x,\y+\dotradius) arc (90:270:\dotradius);
        \fill[constant] (\y,\x-\dotradius) arc (270:450:\dotradius);}
      \foreach \x in {1,...,\latticesize} {
        \fill[dependent] (\x,\y+\dotradius) arc (90:270:\dotradius);
        \fill[dependent] (\y,\x-\dotradius) arc (270:450:\dotradius);}}

    \foreach \y in {-5,...,\latticesize}
      \foreach \x in {-5,...,\latticesize} {
        \draw[black, thick] (\x,\y) circle (\dotradius);
        \draw[black, thin] (\x,\y-\dotradius) -- (\x,\y+\dotradius);}

    \foreach \x in {-1,...,-\latticesize} {
      \draw[carlson, very thick, <->] (\x-\dotradius,-\dotradius) -- (\x-1+\dotradius,-1+\dotradius);
      \draw[carlson, very thick, <->] (\x-\dotradius,-1+\dotradius) -- (\x-1+\dotradius,-\dotradius);}

    \foreach \x in {0,2}
      \draw[gasymov, very thick, <->] (\dotradius,-\x-2+\dotradius) to [out=45,in=315] (\dotradius,-\x-\dotradius);
    \foreach \x in {1,3}
      \draw[gasymov, very thick, <->] (-\dotradius,-\x-2+\dotradius) to [out=135,in=225] (-\dotradius,-\x-\dotradius);

    \foreach \x in {0,...,\latticesize}
      \draw[crum, very thick, ->] (-\x-\dotradius,-\x-\dotradius) -- (-\x-1+\dotradius,-\x-1+\dotradius);

    \foreach \x in {0,...,3} {
      \draw[binding, very thick, <->] (-\dotradius,\x+\dotradius) -- (-1+\dotradius,\x+1-\dotradius);
      \draw[binding, very thick, <->] (-1+\dotradius,\x+\dotradius) -- (-\dotradius,\x+1-\dotradius);}

    \pgfmathsetmacro\legendx{-\latticesize-4.8}
    \pgfmathsetmacro\legendy{\latticesize-1.5}
    \pgfmathsetmacro\legendwidth{6.1}
    \pgfmathsetmacro\legendheight{2.6}

    \draw[lightgray, ultra thick] (\legendx+0.04, \legendy-0.04) rectangle (\legendx+\legendwidth+0.04, \legendy+\legendheight-0.04);
    \filldraw[fill=white, draw=black, very thick] (\legendx, \legendy) rectangle (\legendx+\legendwidth, \legendy+\legendheight);

    \draw[crum, very thick, ->] (\legendx+0.3,\legendy+2.2) -- (\legendx+1.3,\legendy+2.2) node[black, right] {Crum~\cite{C55}};
    \draw[gasymov, very thick, ->] (\legendx+0.3,\legendy+1.6) -- (\legendx+1.3,\legendy+1.6) node[black, right] {Gasymov~\cite{G65}};
    \draw[carlson, very thick, ->] (\legendx+0.3,\legendy+1) -- (\legendx+1.3,\legendy+1) node[black, right] {Carlson~\cite{C93}};
    \draw[binding, very thick, ->] (\legendx+0.3,\legendy+0.4) -- (\legendx+1.3,\legendy+0.4) node[black, right] {Binding, Browne, Watson~\cite{BBW02b}};
  \end{tikzpicture}}
  \caption{Single and double commutation transformations used in some of the previous works}
  \label{fig:previous}
\end{figure}

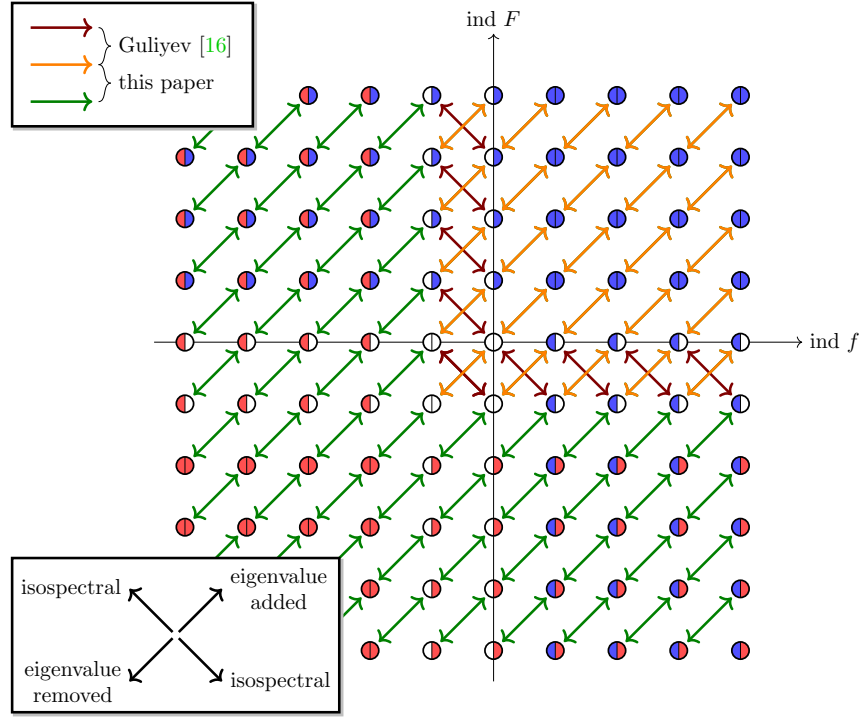
\begin{figure}[ht]
  \resizebox{0.9\textwidth}{!}{\begin{tikzpicture}
    \definecolor{singular}{RGB}{255,80,80}
    \definecolor{constant}{RGB}{255,255,255}
    \definecolor{dependent}{RGB}{80,80,255}
    \definecolor{annali}{RGB}{128,0,0}
    \definecolor{this}{RGB}{0,128,0}
    \definecolor{common}{RGB}{255,128,0}
    \pgfmathsetmacro\latticesize{4}
    \pgfmathsetmacro\dotradius{0.14}

    \draw[->] (-\latticesize-1.5,0) -- (\latticesize+1,0) node[right] {ind $f$};
    \draw[->] (0,-\latticesize-1.5) -- (0,\latticesize+1) node[above] {ind $F$};

    \foreach \y in {-5,...,\latticesize} {
      \foreach \x in {-5,...,-2} {
        \fill[singular] (\x,\y+\dotradius) arc (90:270:\dotradius);
        \fill[singular] (\y,\x-\dotradius) arc (270:450:\dotradius);}
      \foreach \x in {-1,0} {
        \fill[constant] (\x,\y+\dotradius) arc (90:270:\dotradius);
        \fill[constant] (\y,\x-\dotradius) arc (270:450:\dotradius);}
      \foreach \x in {1,...,\latticesize} {
        \fill[dependent] (\x,\y+\dotradius) arc (90:270:\dotradius);
        \fill[dependent] (\y,\x-\dotradius) arc (270:450:\dotradius);}}

    \foreach \y in {-5,...,\latticesize}
      \foreach \x in {-5,...,\latticesize} {
        \draw[black, thick] (\x,\y) circle (\dotradius);
        \draw[black, thin] (\x,\y-\dotradius) -- (\x,\y+\dotradius);}

    \foreach \x in {-5,...,3}
      \foreach \y in {-5,...,3}
        \draw[this, very thick, <->] (\x+\dotradius,\y+\dotradius) -- (\x+1-\dotradius,\y+1-\dotradius);

    \foreach \x in {-1,...,3} {
      \draw[annali, very thick, <->] (\x+\dotradius,-\dotradius) -- (\x+1-\dotradius,-1+\dotradius);
      \draw[annali, very thick, <->] (-\dotradius,\x+\dotradius) -- (-1+\dotradius,\x+1-\dotradius);}

    \foreach \x in {-1,...,3}
      \foreach \y in {-1,...,3}
        \draw[common, very thick, <->] (\x+\dotradius,\y+\dotradius) -- (\x+1-\dotradius,\y+1-\dotradius);

    \pgfmathsetmacro\legendx{-\latticesize-3.8}
    \pgfmathsetmacro\legendy{\latticesize-0.5}
    \pgfmathsetmacro\legendwidth{3.9}
    \pgfmathsetmacro\legendheight{2}

    \draw[lightgray, ultra thick] (\legendx+0.04, \legendy-0.04) rectangle (\legendx+\legendwidth+0.04, \legendy+\legendheight-0.04);
    \filldraw[fill=white, draw=black, very thick] (\legendx, \legendy) rectangle (\legendx+\legendwidth, \legendy+\legendheight);

    \draw[annali, very thick, ->] (\legendx+0.3,\legendy+1.6) -- (\legendx+1.3,\legendy+1.6);
    \draw[common, very thick, ->] (\legendx+0.3,\legendy+1.0) -- (\legendx+1.3,\legendy+1.0);
    \draw[this, very thick, ->] (\legendx+0.3,\legendy+0.4) -- (\legendx+1.3,\legendy+0.4);
    \draw (\legendx+1.4,\legendy+1.6) to [out=0,in=180] (\legendx+1.6,\legendy+1.3);
    \draw (\legendx+1.4,\legendy+1.0) to [out=0,in=180] (\legendx+1.6,\legendy+1.3) node[right] {Guliyev~\cite{G17}};
    \draw (\legendx+1.4,\legendy+1.0) to [out=0,in=180] (\legendx+1.6,\legendy+0.7);
    \draw (\legendx+1.4,\legendy+0.4) to [out=0,in=180] (\legendx+1.6,\legendy+0.7) node[right] {this paper};

    \pgfmathsetmacro\lgndtwox{-\latticesize-3.8}
    \pgfmathsetmacro\lgndtwoy{-\latticesize-2}
    \pgfmathsetmacro\lgndtwowidth{5.3}
    \pgfmathsetmacro\lgndtwoheight{2.5}

    \draw[lightgray, ultra thick] (\lgndtwox+0.04, \lgndtwoy-0.04) rectangle (\lgndtwox+\lgndtwowidth+0.04, \lgndtwoy+\lgndtwoheight-0.04);
    \filldraw[fill=white, draw=black, very thick] (\lgndtwox, \lgndtwoy) rectangle (\lgndtwox+\lgndtwowidth, \lgndtwoy+\lgndtwoheight);

    \draw[very thick, ->] (\lgndtwox+2.6,\lgndtwoy+1.3) -- (\lgndtwox+1.9,\lgndtwoy+2.0) node[left, align=center] {isospectral};
    \draw[very thick, ->] (\lgndtwox+2.7,\lgndtwoy+1.3) -- (\lgndtwox+3.4,\lgndtwoy+2.0) node[right, align=center] {eigenvalue \\ added};
    \draw[very thick, ->] (\lgndtwox+2.7,\lgndtwoy+1.2) -- (\lgndtwox+3.4,\lgndtwoy+0.5) node[right, align=center] {isospectral};
    \draw[very thick, ->] (\lgndtwox+2.6,\lgndtwoy+1.2) -- (\lgndtwox+1.9,\lgndtwoy+0.5) node[left, align=center] {eigenvalue \\ removed};
  \end{tikzpicture}}
  \caption{Transformations considered in~\cite{G17} and in this paper}
  \label{fig:ours}
\end{figure}

We can use the lattice of integer points in the plane to schematically describe boundary value problems and transformations between them, where the abscissa (respectively, the ordinate) of each point equals the index of the boundary condition or singularity at the left (respectively, the right) endpoint. For instance, the transformations considered in some of the papers mentioned above are shown in Figure~\ref{fig:previous}. Single commutations are indicated by diagonal arrows and double commutations by (slightly curved) vertical arrows. Figure~\ref{fig:ours} shows the transformations considered in this paper together with those considered in~\cite{G17}. Transformations (arrows) pointing in the south-west direction remove the smallest eigenvalue, those pointing in the north-east direction add a new eigenvalue below the spectrum, and finally, ones pointing in the south-east or north-west directions (absent from this paper) are isospectral.

The paper is organized as follows. Section~\ref{sec:preliminaries} contains the necessary notation and definitions, as well as some preliminary results. Section~\ref{sec:transformations} is devoted to our transformations. In Subsection~\ref{ss:nevanlinna} we define two transformations on the set of rational Herglotz--Nevanlinna functions and inverse square singularities. In Subsections~\ref{ss:isospectral} and~\ref{ss:inverseisospectral} we define two transformations between boundary value problems, study how these transformations affect the eigenvalues and the norming constants of a problem, and show that they are inverses of each other. We apply these transformations in Section~\ref{sec:applications} to the solution of a number of direct and inverse spectral problems. Namely, we study the asymptotics of the eigenvalues and the norming constants in Subsection~\ref{ss:asymptotics}, prove an oscillation theorem in Subsection~\ref{ss:oscillation}, calculate the so-called \emph{regularized trace} in Subsection~\ref{ss:trace}, obtain necessary and sufficient conditions for two sequences of real numbers to be the eigenvalues and the norming constants of some boundary value problem in Subsection~\ref{ss:byspectraldata}, and consider symmetric problems and a Hochstadt--Lieberman-type result in the final Subsection~\ref{ss:byonespectrum}.

\section{Preliminaries} \label{sec:preliminaries}

We start by recalling some definitions from~\cite{G17} and introducing some new ones. We also present several preliminary results which will be needed in the subsequent sections.

\subsection{Notation} \label{ss:notation}

Every rational Herglotz--Nevanlinna function can be written as
\begin{equation*}
  f(\lambda) = h_0 \lambda + h + \sum_{k=1}^d \frac{\delta_k}{h_k - \lambda}
\end{equation*}
with $h_0 \ge 0$, $h \in \mathbb{R}$, $\delta_k > 0$, and $h_1 < \ldots < h_d$. We assign to each function $f$ of this form its \emph{index}
\begin{equation*}
  \ind f := \begin{cases} 2 d + 1, & h_0 > 0, \\ 2 d, & h_0 = 0 \end{cases}
\end{equation*}
and two polynomials $f_\uparrow$ and $f_\downarrow$ by writing this function as
\begin{equation*}
  f(\lambda) = \frac{f_\uparrow(\lambda)}{f_\downarrow(\lambda)},
\end{equation*}
where
\begin{equation*}
  f_\downarrow(\lambda) := h'_0 \prod_{k=1}^d (h_k - \lambda), \qquad h'_0 := \begin{cases} 1 / h_0, & h_0 > 0, \\ 1, & h_0 = 0. \end{cases}
\end{equation*}
Clearly, the index counts each finite pole of a rational Herglotz--Nevanlinna function twice and its pole at infinity (if any) once. We denote the smallest pole of $f$ and the total number of its poles (if it has any) not exceeding $\lambda$ by $\mathring{\boldsymbol{\uppi}}(f)$ and $\boldsymbol{\Pi}_f(\lambda)$ respectively:
\begin{equation*}
  \mathring{\boldsymbol{\uppi}}(f) := \begin{cases} h_1, & \ind f \ge 2, \\ +\infty, & \ind f \le 1, \end{cases} \qquad\qquad \boldsymbol{\Pi}_f(\lambda) := \sum_{\substack{1 \le k \le d \\ h_k \le \lambda}} 1.
\end{equation*}

In~\cite{G17} we used the infinity symbol to represent the Dirichlet condition. In this paper we denote the Dirichlet condition by $\boldsymbol{\infty}_0$, and introduce an entire sequence of ``infinities'' $\boldsymbol{\infty}_n$ to denote singularities with coefficients $n (n + 1)$ for $n \ge 1$. We set $\ind \boldsymbol{\infty}_n := - n - 1$. We denote by $\mathfrak{B}$ the set of all rational Herglotz--Nevanlinna functions and these infinity symbols $\boldsymbol{\infty}_n$, $n \ge 0$. The above two functions are extended to all $f \in \mathfrak{B}$ by setting $\mathring{\boldsymbol{\uppi}}(\boldsymbol{\infty}_n) = +\infty$ and $\boldsymbol{\Pi}_{\boldsymbol{\infty}_n}(\lambda) \equiv 0$.

In order to be able to deal with boundary conditions and inverse square singularities in a unified way, we introduce the notation
\begin{equation*}
  \ell_f := -1 - \min \{ 0, \ind f \}
\end{equation*}
for $f \in \mathfrak{B}$. Obviously, $\ell_f = n$ if $f = \boldsymbol{\infty}_n$ and $\ell_f = -1$ if $f$ is a rational Herglotz--Nevanlinna function. Thus, in effect, we are just assigning the ``orbital quantum number'' $-1$ to all rational Herglotz--Nevanlinna functions.

We denote by $\mathscr{P}(q, f, F)$ the boundary value problem generated by the equation
\begin{equation} \label{eq:SL}
  -y''(x) + \left( \frac{\ell_f (\ell_f + 1)}{x^2} + \frac{\ell_F (\ell_F + 1)}{(\pi - x)^2} + q(x) \right) y(x) = \lambda y(x)
\end{equation}
and the boundary conditions
\begin{equation} \label{eq:boundary}
  \frac{y'(0)}{y(0)} = -f(\lambda), \qquad \frac{y'(\pi)}{y(\pi)} = F(\lambda)
\end{equation}
at nonsingular endpoints, if any. In other words, at each endpoint we either have a boundary condition or inverse square singularity. We start with the one-dimensional Schr\"{o}dinger equation with the potential $q$, and then adjoin the first boundary condition from~(\ref{eq:boundary}) (respectively, the Dirichlet condition $y(0) = 0$) if $f$ is a rational Herglotz--Nevanlinna function (respectively, $f = \boldsymbol{\infty}_0$), or simply add the term $n (n + 1) / x^2$ to the potential if $f = \boldsymbol{\infty}_n$ with $n \ge 1$, in which case no boundary condition is needed as the potential becomes limit-point at $0$. We do the same for the right endpoint.

We denote by $\mathscr{W}_2^1[0, \pi]$ the Sobolev space of absolutely continuous functions $y$ with $y' \in \mathscr{L}_2(0, \pi)$. The notation
\begin{equation*}
  x_n = y_n + \ell_2 \left( \frac{1}{n^\alpha} \right)
\end{equation*}
means $\sum_{n = 0}^\infty \left| n^\alpha (x_n - y_n) \right|^2 < \infty$, and $\mathbf{1}_A$ denotes the indicator function of a set $A$:
\begin{equation*}
  \mathbf{1}_A(x) := \begin{cases} 1, & x \in A, \\ 0, & x \notin A. \end{cases}
\end{equation*}

\subsection{Regular solutions} \label{ss:solutions}

We define the \emph{left regular solution} $\varphi$ of the problem $\mathscr{P}(q, f, F)$ as the solution of the equation~(\ref{eq:SL}) that satisfies the initial conditions
\begin{equation*}
  \varphi(0, \lambda) = f_\downarrow(\lambda), \qquad \varphi'(0, \lambda) = -f_\uparrow(\lambda)
\end{equation*}
when $\ind f \ge -1$ and asymptotically behaves as
\begin{equation*}
  \varphi(x, \lambda) \sim \frac{x^{\ell_f + 1}}{\left( 2 \ell_f + 1 \right)!!}, \quad x \to 0
\end{equation*}
near the left endpoint when $\ind f \le -1$, where $\left( 2 \ell_f + 1 \right)!! := \prod_{k=0}^{\ell_f} (2 k + 1)$. For the Dirichlet boundary condition ($\ind f = -1$) these two conditions coincide. Similarly, the \emph{right regular solution} $\psi$ is defined as the one satisfying the initial conditions
\begin{equation*}
  \psi(\pi, \lambda) = F_\downarrow(\lambda), \qquad \psi'(\pi, \lambda) = F_\uparrow(\lambda)
\end{equation*}
or the asymptotics
\begin{equation*}
  \psi(x, \lambda) \sim \frac{(\pi - x)^{\ell_F + 1}}{\left( 2 \ell_F + 1 \right)!!}, \quad x \to \pi.
\end{equation*}

For each fixed $x \in (0, \pi)$ (and also for nonsingular endpoints) the solutions $\varphi(x, \lambda)$ and $\psi(x, \lambda)$ together with their first derivatives with respect to $x$ are entire functions of $\lambda$. The boundary value problem $\mathscr{P}(q, f, F)$ has a discrete set of eigenvalues, which are real and simple, and for each eigenvalue $\lambda_n$ there exists a unique number $\beta_n \ne 0$ such that
\begin{equation*}
  \psi(x, \lambda_n) = \beta_n \varphi(x, \lambda_n).
\end{equation*}
These eigenvalues coincide with the zeros of the \emph{characteristic function}
\begin{equation*}
  \chi(\lambda) := \varphi(x, \lambda) \psi'(x, \lambda) - \varphi'(x, \lambda) \psi(x, \lambda),
\end{equation*}
which is independent of $x \in (0, \pi)$. The asymptotics of the regular solutions and their first derivatives (see \cite[formulas (2.24), (2.25)]{KST10} and the proof of \cite[Lemma 2.2]{G17} for the cases of singular and nonsingular endpoints respectively) show that the characteristic function is entire of order $1/2$. Therefore, Hadamard's theorem \cite[Theorem 9.10.8]{S15} yields the infinite product representation
\begin{equation} \label{eq:product}
  \chi(\lambda) = -\pi^{\mathbf{1}_{\mathbb{Z}}(L)} \prod_{n = \lfloor L \rfloor + 1}^{n = -1} \frac{1}{(n - L)^2} \prod_{n = 0}^{n = \lfloor L \rfloor} (\lambda_n - \lambda) \prod_{n = \max \{ \lfloor L \rfloor + 1, 0 \}}^{+\infty} \frac{\lambda_n - \lambda}{(n - L)^2}
\end{equation}
with $L := (\ind f + \ind F) / 2$; see, e.g., \cite[Lemma A.1]{G18a} for details.

If the left endpoint is singular or the boundary condition at the left endpoint is Dirichlet (i.e., $\ell_f \ge 0$) then the left regular solution satisfies the asymptotics
\begin{equation*}
  \frac{\varphi'(x, \lambda)}{\varphi(x, \lambda)} = \frac{\ell_f + 1}{x} + o(1), \quad x \to 0.
\end{equation*}
We will also need in Subsection \ref{ss:trace} a more refined version of this asymptotics when $q \in \mathscr{W}_2^1[0, \pi]$. In this case we have
\begin{equation} \label{eq:fine_asymptotics}
  \frac{\varphi'(x, \lambda)}{\varphi(x, \lambda)} = \frac{\ell_f + 1}{x} + \frac{x}{2 \ell_f + 3} \left( q(0) + \frac{\ell_F (\ell_F + 1)}{\pi^2} - \lambda \right) + o(x), \quad x \to 0.
\end{equation}
Both of these asymptotics can be obtained from the proof of \cite[Theorem 4.2]{C93}.

The following nonoscillation result can be proved in the usual way (see, e.g., the proofs of \cite[Lemmas 3.3 and 3.4]{C93} and \cite[Lemma 2.5]{G17} for details).

\begin{lemma} \label{lem:no_zero}
  If $\lambda \le \lambda_0$ then the left regular solution $\varphi(x, \lambda)$ and the right regular solution $\psi(x, \lambda)$ are strictly positive in $(0,\pi)$.
\end{lemma}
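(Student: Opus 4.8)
The plan is to prove the result by a Sturm-type comparison/continuity argument in $\lambda$, treating the left regular solution $\varphi$ (the argument for $\psi$ being symmetric). First I would fix attention on the endpoint $0$. If this endpoint is nonsingular with $\ind f \ge -1$, then $\varphi(x,\lambda)$ is real-analytic in both variables and the initial data at $0$ is known; if the endpoint is singular (i.e.\ $\ell_f \ge 1$) or Dirichlet, then by the asymptotics recalled just before the lemma, $\varphi(x,\lambda) > 0$ and $\varphi'(x,\lambda)/\varphi(x,\lambda) = (\ell_f+1)/x + o(1) > 0$ for all sufficiently small $x > 0$. So in every case $\varphi$ is positive on an initial interval $(0,\varepsilon)$, and likewise $\psi$ is positive on $(\pi-\varepsilon,\pi)$.

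Next I would argue that, for $\lambda \le \lambda_0$, $\varphi(\cdot,\lambda)$ cannot vanish anywhere in $(0,\pi)$, and in fact stays positive. Suppose toward a contradiction that $\varphi(\cdot,\lambda)$ has a zero in $(0,\pi)$ for some $\lambda \le \lambda_0$; let $\xi(\lambda)$ be the smallest such zero, so $\varphi(\cdot,\lambda) > 0$ on $(0,\xi(\lambda))$ and $\varphi(\xi(\lambda),\lambda) = 0$. By a standard Sturmian comparison between the equations at parameter $\lambda$ and at parameter $\lambda_0$ (using $\lambda \le \lambda_0$, so the potential-minus-$\lambda$ term is pointwise $\ge$ that for $\lambda_0$), together with the fact that $\varphi(\cdot,\lambda)$ and $\varphi(\cdot,\lambda_0)$ have the same behaviour at $0$ (either the same initial conditions, or the same leading asymptotics $x^{\ell_f+1}/(2\ell_f+1)!!$), one shows that between consecutive zeros the solution for the larger parameter oscillates faster; hence $\varphi(\cdot,\lambda_0)$ would be forced to have a zero in $(0,\xi(\lambda)] \subseteq (0,\pi)$. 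But $\lambda_0$ is the smallest eigenvalue of $\mathscr P(q,f,F)$, and this is where one invokes the boundary condition (or limit-point nature) at $\pi$: a zero of $\varphi(\cdot,\lambda_0)$ strictly inside, combined with minimality of $\lambda_0$ among eigenvalues, contradicts the oscillation count — more precisely, if $\varphi(\cdot,\lambda)$ had an interior zero for some $\lambda \le \lambda_0$ then a continuity/monotonicity argument in $\lambda$ produces an eigenvalue $\le \lambda$ below $\lambda_0$ (the zero $\xi(\lambda)$ moves continuously and monotonically, and reaching the right endpoint corresponds to an eigenvalue). This is the standard mechanism in \cite[Lemmas 3.3, 3.4]{C93} and \cite[Lemma 2.5]{G17}, and the only new point is checking that the eigenparameter dependent boundary condition at $\pi$ (or an inverse square singularity there) does not disturb the monotone dependence of the relevant zero on $\lambda$; this follows because $F$ is Herglotz--Nevanlinna (so $F(\lambda)$ is nondecreasing between poles, in the right direction for the argument) and the singular case has no boundary condition to worry about.

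Having ruled out interior zeros, positivity of $\varphi(\cdot,\lambda)$ on all of $(0,\pi)$ for $\lambda \le \lambda_0$ follows: $\varphi$ is positive near $0$, never vanishes in the open interval, and is continuous, hence stays positive throughout. The same argument applied from the right endpoint gives $\psi(\cdot,\lambda) > 0$ on $(0,\pi)$. The main obstacle I anticipate is the careful bookkeeping at the singular endpoint: one must make the Sturm comparison rigorous when the solution behaves like $x^{\ell_f+1}$ rather than satisfying classical initial conditions (so the Wronskian-type identity $W(\varphi(\cdot,\lambda),\varphi(\cdot,\lambda_0))$ must be shown to vanish, not merely be bounded, as $x \to 0$), and similarly to confirm that the oscillation-counting bookkeeping underlying "$\lambda_0$ is the lowest eigenvalue" is compatible with the generalized indices $\ind f, \ind F$ (including negative ones); but both are handled by the asymptotics quoted from \cite{KST10} and \cite{C93} and by the product representation \eqref{eq:product}, so I expect no genuinely new difficulty beyond assembling these ingredients.
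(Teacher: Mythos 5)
Your sketch follows the same route the paper has in mind: the paper gives no proof of its own but defers to the ``usual'' nonoscillation argument of \cite[Lemmas 3.3, 3.4]{C93} and \cite[Lemma 2.5]{G17}, i.e.\ Sturm comparison in $\lambda$ combined with monotone dependence of the first zero on $\lambda$, which is exactly what you describe. However, two steps of your outline are not actually secured, and both are load-bearing.

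First, positivity of $\varphi(\cdot,\lambda)$ near $x=0$ is not automatic ``in every case'': when $\ind f\ge 2$ one has $\varphi(0,\lambda)=f_\downarrow(\lambda)=h_0'\prod_k(h_k-\lambda)$, which is positive only for $\lambda$ below the smallest pole $h_1=\mathring{\boldsymbol{\uppi}}(f)$. So your starting point (and also the sign condition $W(0)\le 0$ needed to run the Wronskian comparison with $\lambda$-dependent initial data) silently uses the fact that $\lambda_0<\mathring{\boldsymbol{\uppi}}(f)$ and $\lambda_0<\mathring{\boldsymbol{\uppi}}(F)$. This is true, and indeed the lemma would be false without it, but it is itself part of what must be proved (it is also what makes $\widehat{\boldsymbol{\Theta}}(\lambda_0,f)$, hence $\widehat{\mathbf{T}}$, well defined), and nothing in your argument delivers it. Second, the pivot of your contradiction --- ``the zero $\xi(\lambda)$ \ldots reaching the right endpoint corresponds to an eigenvalue'' --- is literally correct only when the condition at $\pi$ is Dirichlet. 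For Robin or eigenparameter dependent $F$, the value of $\lambda$ at which $\varphi(\pi,\lambda)=0$ is \emph{not} an eigenvalue of $\mathscr{P}(q,f,F)$; what one must show is that the first such value $\mu_0$ lies strictly above $\lambda_0$, which follows by intersecting the decreasing function $\varphi'(\pi,\lambda)/\varphi(\pi,\lambda)$ with the nondecreasing Herglotz function $F(\lambda)$ on $(-\infty,\mu_0)$. When the right endpoint is singular the mechanism changes again: $\varphi(\cdot,\lambda)$ generically blows up like $(\pi-x)^{-\ell_F}$ there, so ``the zero exits through $\pi$'' has no meaning and one argues instead with the Wronskian of $\varphi$ and $\psi$, as Carlson does. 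You gesture at both points but do not carry either out, and without them the contradiction at the end of your second paragraph is not obtained.
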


Arguing as in the proofs of \cite[Theorems 4.2 and 4.3]{C93} one can also prove that if $y$ is a solution of~(\ref{eq:SL}) having no zeros in $(0,\pi)$, then $\left( y' / y \right)^2 - m_f^2 / x^2 - m_F^2 / (\pi - x)^2 \in \mathscr{L}_2(0, \pi)$, where $m_f = \ell_f + 1$ or $m_f = -\ell_f$, depending on whether $y$ is bounded near the left endpoint or not, and similarly for the right endpoint. Together with the obvious identity
\begin{equation} \label{eq:Riccati}
  \left( \frac{y'(x, \lambda)}{y(x, \lambda)} \right)' + \left( \frac{y'(x, \lambda_0)}{y(x, \lambda_0)} \right)^2 = q(x) + \frac{\ell_f (\ell_f + 1)}{x^2} + \frac{\ell_F (\ell_F + 1)}{(\pi - x)^2} - \lambda
\end{equation}
this gives $\left( y' / y \right)' + m_f / x^2 + m_F / (\pi - x)^2 \in \mathscr{L}_2(0, \pi)$. It is immediate from Lemma~\ref{lem:no_zero} that the condition of not having zeros in $(0,\pi)$ is certainly fulfilled for $\varphi$, and more generally, for linear combinations of $\varphi$ and $\psi$ with strictly positive coefficients. For the sake of later reference we now formulate our result in these two special cases as a lemma.

\begin{lemma} \label{lem:L_2}
  The eigenfunction $\varphi(x, \lambda_0)$ corresponding to the smallest eigenvalue satisfies the relation
\begin{equation*}
  \left( \frac{\varphi'(x, \lambda_0)}{\varphi(x, \lambda_0)} \right)' + \frac{\ell_f + 1}{x^2} + \frac{\ell_F + 1}{(\pi - x)^2} \in \mathscr{L}_2(0, \pi).
\end{equation*}
Similarly, if $\lambda < \lambda_0$ and $y(x, \lambda)$ is a solution of the equation~(\ref{eq:SL}) that can be expressed as a linear combination with strictly positive coefficients of the regular solutions $\varphi$ and $\psi$, then
\begin{equation*}
  \left( \frac{y'(x, \lambda)}{y(x, \lambda)} \right)' - \frac{\min \{ \ell_f, 0 \}}{x^2} - \frac{\min \{ \ell_F, 0 \}}{(\pi - x)^2} \in \mathscr{L}_2(0, \pi).
\end{equation*}
\end{lemma}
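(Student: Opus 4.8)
The plan is to read both assertions off the statement recalled in the paragraph immediately preceding the lemma: if $y$ solves~(\ref{eq:SL}) and has no zeros in $(0,\pi)$, then
\[
  \left( \frac{y'(x,\lambda)}{y(x,\lambda)} \right)' + \frac{m_f}{x^2} + \frac{m_F}{(\pi-x)^2} \in \mathscr{L}_2(0,\pi),
\]
where $m_f = \ell_f+1$ or $m_f = -\ell_f$ according to whether $y$ is bounded as $x \to 0$, and $m_F$ is determined in the same way by the behaviour of $y$ as $x \to \pi$; recall that this itself comes from the $\mathscr{L}_2$-membership of $(y'/y)^2 - m_f^2/x^2 - m_F^2/(\pi-x)^2$ together with the Riccati identity~(\ref{eq:Riccati}). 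Thus each of the two assertions reduces to three steps: (i) checking that the solution at hand is zero-free, so that the above applies; (ii) reading off which of the two asymptotic regimes it lies in at each endpoint; and (iii) substituting the corresponding $m_f,m_F$ and simplifying the singular coefficients into the form displayed in the lemma.

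For the first assertion I would take $y = \varphi(\cdot,\lambda_0)$. Step (i) is immediate from Lemma~\ref{lem:no_zero} with $\lambda = \lambda_0$, which gives $\varphi(\cdot,\lambda_0) > 0$ on $(0,\pi)$. For step (ii): near the left endpoint $\varphi(\cdot,\lambda_0)$ is bounded straight from the definition of the left regular solution — it is continuous up to $0$ when $\ell_f = -1$, and $\varphi(x,\lambda_0) \sim x^{\ell_f+1}/(2\ell_f+1)!!$ when $\ell_f \ge 0$ — so $m_f = \ell_f+1$; near the right endpoint, since $\lambda_0$ is an eigenvalue we have $\psi(\cdot,\lambda_0) = \beta_0\varphi(\cdot,\lambda_0)$ with $\beta_0 \ne 0$, and $\beta_0 > 0$ because $\varphi(\cdot,\lambda_0)$ and $\psi(\cdot,\lambda_0)$ are both positive on $(0,\pi)$ by Lemma~\ref{lem:no_zero}, so $\varphi(\cdot,\lambda_0)$ is a positive multiple of the right regular solution and is hence bounded as $x \to \pi$, giving $m_F = \ell_F+1$. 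With these values the displayed fact above is exactly the first relation of the lemma, so step (iii) is vacuous here.

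For the second assertion I would write $y(\cdot,\lambda) = c_1\varphi(\cdot,\lambda) + c_2\psi(\cdot,\lambda)$ with $c_1,c_2 > 0$ and $\lambda < \lambda_0$. Step (i) follows from Lemma~\ref{lem:no_zero} (applicable since $\lambda \le \lambda_0$), which makes $\varphi(\cdot,\lambda)$ and $\psi(\cdot,\lambda)$ strictly positive on $(0,\pi)$, hence $y > 0$ there. For step (ii): since $\lambda$ lies below the spectrum it is not an eigenvalue, so $\psi(\cdot,\lambda)$ is not a scalar multiple of $\varphi(\cdot,\lambda)$; the left regular solution $\varphi(\cdot,\lambda)$ carries, by construction, the principal ($x^{\ell_f+1}$-type) behaviour at the left endpoint, forcing $\psi(\cdot,\lambda)$ to carry the non-principal ($x^{-\ell_f}$-type) one there, and since $c_2 > 0$ while $\varphi(\cdot,\lambda)$ contributes nothing of that type, the sum $y$ inherits the non-principal behaviour of $\psi(\cdot,\lambda)$ at the left endpoint; symmetrically, $y$ inherits the non-principal behaviour of $\varphi(\cdot,\lambda)$ at the right endpoint. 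Substituting the resulting values of $m_f$ (and $m_F$) into the displayed fact and simplifying the left- and right-endpoint coefficients gives the form stated in the lemma, which completes step (iii).

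The routine parts — steps (i) and (iii), and the passage from $(y'/y)^2$ to $(y'/y)'$ via~(\ref{eq:Riccati}) — all go as in the proofs of \cite{C93} referenced above. The only point that really needs attention is step (ii) of the second assertion: the observation that a genuine two-term positive combination of $\varphi$ and $\psi$ at an energy strictly below the spectrum exhibits the non-principal asymptotics at \emph{each} endpoint rather than the principal one, with no accidental cancellation because $\varphi$ and $\psi$ each contribute only the principal behaviour at their own endpoint.
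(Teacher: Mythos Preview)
Your approach matches the paper's: invoke the general $\mathscr{L}_2$ fact for zero-free solutions from the paragraph preceding the lemma, check zero-freeness via Lemma~\ref{lem:no_zero}, and determine $m_f$, $m_F$ from endpoint behaviour. The first assertion is handled correctly and exactly as in the paper.

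For the second assertion there is a gap in your step~(ii). You argue that, since $\psi(\cdot,\lambda)$ is linearly independent of $\varphi(\cdot,\lambda)$, it must carry the non-principal $x^{-\ell_f}$-type behaviour at the left endpoint, so that $y$ does too and hence $m_f = -\ell_f$ uniformly. But $-\ell_f$ does not reduce to $-\min\{\ell_f, 0\}$: the two disagree whenever $\ell_f \ne 0$, so your step~(iii) (``substituting and simplifying'') cannot go through as claimed. Moreover, at a regular left endpoint ($\ell_f = -1$) there is no principal/non-principal dichotomy in the sense you use --- both fundamental solutions are bounded and generically nonzero at $0$ --- so the assertion that linear independence ``forces'' $\psi$ into the $x^{-\ell_f}$ class is not meaningful there. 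The paper handles this point differently and more carefully: it observes (in the sentence immediately following the lemma) that for $\lambda$ off the spectrum a regular solution is \emph{unbounded} near the opposite endpoint if and only if that endpoint is singular. It is this singular-versus-nonsingular case split --- not a uniform ``non-principal'' claim --- that the paper invokes to explain the appearance of the minimum function. Your final paragraph rightly flags step~(ii) as the crux, but the resolution you offer is not the one the paper uses, and as written it does not produce the stated coefficient.
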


The use of the minimum function in the second case stems from the following observation: for values of $\lambda$ not coinciding with one of the eigenvalues, a (left or right) regular solution is unbounded near the opposite endpoint if and only if that endpoint is singular.

\subsection{Hilbert space formulation and norming constants} \label{ss:hilbert}

The differential expression on the left-hand side of~(\ref{eq:SL}) generates a self-adjoint operator in $\mathscr{L}_2(0, \pi)$ whenever there are no boundary conditions dependent on the eigenvalue parameter. The domain of this operator consists of functions for which the differential expression in~(\ref{eq:SL}) is well defined and which satisfy the corresponding boundary conditions (if any). This is no longer true when the eigenvalue parameter is present in one or both of the boundary conditions. One now needs to consider a direct sum of $\mathscr{L}_2(0, \pi)$ and a suitable number of copies of $\mathbb{C}$: $\mathfrak{H} = \mathscr{L}_2(0,\pi) \oplus \mathbb{C}^N$. The number of these extra copies equals the total number of poles of the functions $f$ and $F$, including those at infinity:
\begin{equation*}
  N := \left\lceil \frac{\ind f}{2} \right\rceil_{+} + \left\lceil \frac{\ind F}{2} \right\rceil_{+},
\end{equation*}
where we denoted by $\lceil \cdot \rceil_{+}$ the smallest nonnegative integer not smaller than the argument.

We now describe the operator when the left endpoint is singular ($\ind f \le -2$) and there is dependence on the eigenvalue parameter at the right endpoint ($\ind F \ge 1$ is odd), i.e.,
\begin{equation*}
  F(\lambda) = H_0 \lambda + H + \sum_{k=1}^D \frac{\Delta_k}{H_k - \lambda}
\end{equation*}
with $H_0 > 0$, $H \in \mathbb{R}$, $\Delta_k > 0$, and $H_1 < \ldots < H_D$. In this case, $\mathfrak{H} = \mathscr{L}_2(0,\pi) \oplus \mathbb{C}^{D+1}$ with inner product given by
\begin{equation*}
  \langle Y, Z \rangle := \int_0^{\pi} y(x) \overline{z(x)} \,\du x + \sum_{k=1}^{D} \frac{\eta_k \overline{\zeta_k}}{\Delta_k} + \frac{\eta_{D+1} \overline{\zeta_{D+1}}}{H_0}
\end{equation*}
for $Y = \begin{pmatrix} y & \eta_1 & \dots & \eta_{D+1} \end{pmatrix}^T \in \mathfrak{H}$ and $Z = \begin{pmatrix} z & \zeta_1 & \dots & \zeta_{D+1} \end{pmatrix}^T \in \mathfrak{H}$, where the superscript ${}^T$ denotes the transpose. The self-adjoint operator in this space is defined by
\begin{equation*}
  A(Y) := \begin{pmatrix} -y''(x) + \left( \frac{\ell_f (\ell_f + 1)}{x^2} + q(x) \right) y(x) \\ H_1 \eta_1 - \Delta_1 y(\pi) \\ \vdots \\ H_{D} \eta_{D} - \Delta_{D} y(\pi) \\ y'(\pi) - H y(\pi) - \sum_{k=1}^{D} \eta_k \end{pmatrix}
\end{equation*}
on the domain
\begin{multline*}
  \mathfrak{D}(A) := \left\{ Y \in \mathfrak{H} \bigm| y, y' \in \mathscr{AC}(0,\pi],\ -y'' + \left( \ell_f (\ell_f + 1) / x^2 + q \right) y \in \mathscr{L}_2(0,\pi), \right. \\
  \left. \vphantom{\bigm|} \eta_{D+1} = H_0 y(\pi) \right\}.
\end{multline*}
The other cases are similar; see, e.g., \cite[Section 2.2]{G17} for the general regular case.

The operator $A$ thus defined is self-adjoint in $\mathfrak{H}$ and its spectrum is purely discrete and coincides with the set of eigenvalues of the boundary value problem $\mathscr{P}(q, f, F)$. We denote by $\Phi_n$ the eigenvector of this operator whose first component coincides with $\varphi(x, \lambda_n)$. For instance, in the case considered in the preceding paragraph $\Phi_n$ has the form
\begin{equation*}
  \Phi_n := \begin{pmatrix} \varphi(x, \lambda_n) & \frac{\Delta_1}{H_1 - \lambda_n} \varphi(\pi, \lambda_n) & \dots & \frac{\Delta_{D}}{H_{D} - \lambda_n} \varphi(\pi, \lambda_n) & H_0 \varphi(\pi, \lambda_n) \end{pmatrix}^T.
\end{equation*}
These eigenvectors form an orthogonal basis for $\mathfrak{H}$. As for the eigenfunctions $\varphi(x, \lambda_n)$ of the boundary value problem $\mathscr{P}(q, f, F)$, they may form a basis for $\mathscr{L}_2(0,\pi)$ only after removing $N$ of them and that also depends on which ones exactly are removed~\cite{G19b}.

We define the \emph{norming constants} as
\begin{equation*}
  \gamma_n := \| \Phi_n \|_{\mathfrak{H}}^2.
\end{equation*}
For a problem without singular endpoints, the three sequences $\{ \lambda_n \}_{n \ge 0}$, $\{ \beta_n \}_{n \ge 0}$, and $\{ \gamma_n \}_{n \ge 0}$ satisfy the identity \cite[Lemma~2.1]{G17}
\begin{equation} \label{eq:chi_beta_gamma}
  \chi'(\lambda_n) = \beta_n \gamma_n.
\end{equation}
That this identity also holds for problems with singular endpoints can be verified in a straightforward manner by using the above asymptotics of the regular solutions and the fact that any solution of the equation~(\ref{eq:SL}) behaves as $O \left( x^{-\ell_f} \right)$ near (say) the left endpoint.

\section{Transformations} \label{sec:transformations}

We now turn to our transformations. First we define transformations on the set of rational Herglotz--Nevanlinna functions and inverse square singularities. Then we introduce the main players of this paper---transformations between boundary value problems of the form $\mathscr{P}(q, f, F)$.

\subsection{Transformation of boundary conditions and singularities} \label{ss:nevanlinna}

In \cite{G17}, we used a single transformation to both increase and decrease the index of a boundary condition. For the purposes of this paper, however, it is more convenient to define two separate transformations, one in either direction. The first of these transformations acts in the direction from eigenparameter dependent boundary conditions towards inverse square singularities, while the second in the opposite direction (see Figure~\ref{fig:index}).

On the set $\widehat{\Sigma} := \left\{ (\mu, f) \in \mathbb{R} \times \mathfrak{B} \colon \mu < \mathring{\boldsymbol{\uppi}}(f) \right\}$, we define the transformation
\begin{equation*}
  \widehat{\boldsymbol{\Theta}} \colon \widehat{\Sigma} \to \mathfrak{B},\ (\mu, f) \mapsto \widehat{f}
\end{equation*}
by
\begin{equation*}
  \widehat{f} := \begin{cases} \lambda \mapsto \frac{\mu - \lambda}{f(\lambda) - f(\mu)} - f(\mu), & \ind f \ge 1, \\ \boldsymbol{\infty}_{-\ind f}, & \ind f \le 0. \end{cases}
\end{equation*}
Here $\widehat{f}(\mu) = - \left( f'(\mu) \right)^{-1} - f(\mu)$, by definition, in the first case. Similarly, on the set $\widetilde{\Sigma} := \left\{ (\mu, \tau, f) \in \mathbb{R}^2 \times \mathfrak{B} \colon \mu < \mathring{\boldsymbol{\uppi}}(f),\ \tau > f(\mu) \text{ if } \ind f \ge 0 \right\}$, we define the transformation
\begin{equation*}
  \widetilde{\boldsymbol{\Theta}} \colon \widetilde{\Sigma} \to \mathfrak{B},\ (\mu, \tau, f) \mapsto \widetilde{f}
\end{equation*}
by
\begin{equation*}
  \widetilde{f} := \begin{cases} \lambda \mapsto \frac{\mu - \lambda}{f(\lambda) - \tau} - \tau, & \ind f \ge 0, \\ \lambda \mapsto -\tau, & \ind f = -1, \\ \boldsymbol{\infty}_{-\ind f - 2}, & \ind f \le -2. \end{cases}
\end{equation*}
These two transformations are inverses of each other in the sense that
\begin{equation*}
  \widetilde{\boldsymbol{\Theta}}(\mu, -f(\mu), \widehat{\boldsymbol{\Theta}}(\mu, f)) = f, \qquad \widehat{\boldsymbol{\Theta}}(\mu, \widetilde{\boldsymbol{\Theta}}(\mu, \tau, f)) = f
\end{equation*}
for any $(\mu, f) \in \widehat{\Sigma}$ and $(\mu, \tau, f) \in \widetilde{\Sigma}$ respectively. The following result, the proof of which can be found in \cite[Subsection 3.1]{G17}, implies in particular that the transformations $\widehat{\boldsymbol{\Theta}}$ and $\widetilde{\boldsymbol{\Theta}}$ are well defined.

\begin{lemma} \label{lem:f_hat}
  We have $\widehat{f} := \widehat{\boldsymbol{\Theta}}(\mu, f) \in \mathfrak{B}$ with $\ind \widehat{f} = \ind f - 1$ and $\widetilde{f} := \widetilde{\boldsymbol{\Theta}}(\mu, \tau, f) \in \mathfrak{B}$ with $\ind \widetilde{f} = \ind f + 1$. The poles of $f$ and $\widehat{f}$ (respectively, $f$ and $\widetilde{f}$) interlace each other with $\mathring{\boldsymbol{\uppi}}(f) < \mathring{\boldsymbol{\uppi}}(\widehat{f})$ (respectively, $\mathring{\boldsymbol{\uppi}}(f) > \mathring{\boldsymbol{\uppi}}(\widetilde{f})$) whenever both of them have poles, i.e., when $\ind f \ge 3$ (respectively, when $\ind f \ge 2$). Moreover,
\begin{equation*}
  \widehat{f}_\uparrow(\lambda) = \frac{- f(\mu) f_\uparrow(\lambda) - \left( \lambda - \mu - f^2(\mu) \right) f_\downarrow(\lambda)}{\lambda - \mu}, \qquad \widehat{f}_\downarrow(\lambda) = \frac{f_\uparrow(\lambda) - f(\mu) f_\downarrow(\lambda)}{\lambda - \mu}
\end{equation*}
and
\begin{equation*}
  \widetilde{f}_\uparrow(\lambda) = \tau f_\uparrow(\lambda) + \left( \lambda - \mu - \tau^2 \right) f_\downarrow(\lambda), \qquad \widetilde{f}_\downarrow(\lambda) = -f_\uparrow(\lambda) + \tau f_\downarrow(\lambda)
\end{equation*}
for any rational Herglotz--Nevanlinna function $f$.
\end{lemma}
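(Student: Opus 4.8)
I would reduce to the case where $f$ is a genuine rational Herglotz--Nevanlinna function and then express each transformed function as the negative reciprocal of an explicit auxiliary function whose poles and zeros can be located by hand. The \emph{trivial cases} are immediate from the definitions: if $\ind f\le 0$ then $\widehat{f}=\boldsymbol{\infty}_{-\ind f}$, so $\ind\widehat{f}=-(-\ind f)-1=\ind f-1$; if $\ind f\le -2$ then $\widetilde{f}=\boldsymbol{\infty}_{-\ind f-2}$, so $\ind\widetilde{f}=\ind f+1$; and if $\ind f=-1$ then $\widetilde{f}\equiv-\tau$, which has index $0=\ind f+1$. In all of these cases neither function has finite poles, so the interlacing claim is vacuous, and the polynomial formulas are in force only when the functions they mention are rational Herglotz--Nevanlinna, i.e.\ for $\ind f\ge 1$ in the $\widehat{\boldsymbol{\Theta}}$ case and $\ind f\ge 0$ in the $\widetilde{\boldsymbol{\Theta}}$ case. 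So it remains to treat $\ind f\ge 1$ for $\widehat{\boldsymbol{\Theta}}$ and $\ind f\ge 0$ for $\widetilde{\boldsymbol{\Theta}}$; write $f(\lambda)=h_0\lambda+h+\sum_{k=1}^d\delta_k/(h_k-\lambda)$ as in Subsection~\ref{ss:notation}, so that $\mu<h_1$ (when $d\ge 1$) and, in the $\widetilde{\boldsymbol{\Theta}}$ case, $\tau>f(\mu)$, by the definitions of $\widehat{\Sigma}$ and $\widetilde{\Sigma}$.

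\emph{Reduction to a reciprocal.} The defining formulas give $\widehat{f}=-1/G-f(\mu)$ and $\widetilde{f}=-1/G'-\tau$, where $G(\lambda):=(f(\lambda)-f(\mu))/(\lambda-\mu)$ and $G'(\lambda):=(f(\lambda)-\tau)/(\lambda-\mu)$. A one-line partial-fraction computation yields
\[
  G(\lambda)=h_0+\sum_{k=1}^{d}\frac{\delta_k/(h_k-\mu)}{h_k-\lambda},\qquad G'(\lambda)=G(\lambda)+\frac{\tau-f(\mu)}{\mu-\lambda}.
\]
Since $h_0\ge 0$, $\delta_k/(h_k-\mu)>0$ (because $\mu<h_k$) and $\tau-f(\mu)>0$, both $G$ and $G'$ are rational Herglotz--Nevanlinna functions; $G$ has exactly the finite poles of $f$ and no pole at infinity (so $\ind G=2d$, $G(\infty)=h_0$), while $G'$ has the one additional pole $\mu<h_1$ (so $\ind G'=2d+2$, $G'(\infty)=h_0$). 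As $w\mapsto-1/w$ maps the rational Herglotz--Nevanlinna class into itself and so does the addition of a real constant, $\widehat{f},\widetilde{f}\in\mathfrak{B}$, which is precisely the well-definedness of $\widehat{\boldsymbol{\Theta}}$ and $\widetilde{\boldsymbol{\Theta}}$.

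\emph{Index and interlacing.} The finite poles of $\widehat{f}$ (resp.\ $\widetilde{f}$) are precisely the zeros of $G$ (resp.\ $G'$), and a rational Herglotz--Nevanlinna function increases strictly on each interval between consecutive poles. Counting its sign changes on these intervals (using $h_0\ge 0$, the signs of the limits $G(\pm\infty)=h_0$ and $G'(\pm\infty)=h_0$, and the blow-up of $G'$ to $\pm\infty$ on the two sides of the pole $\mu$) shows that $G$ has $d$ zeros when $h_0>0$ and $d-1$ when $h_0=0$, while $G'$ has $d+1$ zeros when $h_0>0$ and $d$ when $h_0=0$; since $-1/G$ and $-1/G'$ are bounded at infinity exactly when $h_0>0$, this gives $\ind\widehat{f}=\ind f-1$ and $\ind\widetilde{f}=\ind f+1$ in both parities. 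Moreover the zeros of $G$ interlace its poles $h_1<\dots<h_d$ from above (one in each $(h_k,h_{k+1})$, and one in $(h_d,+\infty)$ when $h_0>0$), so the smallest of them exceeds $h_1$, giving $\mathring{\boldsymbol{\uppi}}(f)<\mathring{\boldsymbol{\uppi}}(\widehat{f})$ and the interlacing of the pole sets of $f$ and $\widehat{f}$; likewise the zeros of $G'$ interlace its poles $\mu<h_1<\dots<h_d$, with the smallest one lying in $(\mu,h_1)$, giving $\mathring{\boldsymbol{\uppi}}(\widetilde{f})<\mathring{\boldsymbol{\uppi}}(f)$ and the interlacing of the pole sets of $f$ and $\widetilde{f}$. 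These statements are non-vacuous exactly when $\ind f\ge 3$ (resp.\ $\ind f\ge 2$), as claimed.

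\emph{The polynomial formulas, and the main difficulty.} Substituting $f=f_\uparrow/f_\downarrow$ and using $G\,f_\downarrow=(f_\uparrow-f(\mu)f_\downarrow)/(\lambda-\mu)$ together with $(f-\tau)f_\downarrow=f_\uparrow-\tau f_\downarrow$ turns $\widehat{f}=-1/G-f(\mu)$ and $\widetilde{f}=-1/G'-\tau$ into exactly the ratios of polynomials displayed in the lemma (one further line of algebra produces $\widehat{f}_\uparrow$). It then remains to check that these numerators and denominators are the \emph{canonical} ones of Subsection~\ref{ss:notation}: coprimality is inherited from that of $(f_\uparrow,f_\downarrow)$, the roots of the candidate denominators were identified above with the poles of $\widehat{f}$ and $\widetilde{f}$, and comparing leading coefficients fixes the remaining scalar. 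I expect this last bookkeeping to be the only real obstacle: the relative degrees of $f_\uparrow$ and $f_\downarrow$, the value of the new leading coefficient, and the signs of the top coefficients all depend on the parity of $\ind f$ (equivalently, on whether $h_0>0$), so it splits into two cases; and one must watch where the strict inequalities $\mu<\mathring{\boldsymbol{\uppi}}(f)$ and $\tau>f(\mu)$ are used, since these are exactly what make the residues of $G$ and $G'$ positive and what force the extra zero of $G'$ to lie in $(\mu,h_1)$ rather than below $\mu$.
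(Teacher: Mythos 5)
Your proposal is correct, and it supplies an argument that this paper itself omits: the text simply refers the reader to \cite[Subsection 3.1]{G17} for the proof. Your route --- reducing to $\widehat{f}=-1/G-f(\mu)$ and $\widetilde{f}=-1/G'-\tau$ with $G(\lambda)=(f(\lambda)-f(\mu))/(\lambda-\mu)$, $G'=G+(\tau-f(\mu))/(\mu-\lambda)$, verifying via partial fractions that $G$ and $G'$ are Herglotz--Nevanlinna with positive residues (this is exactly where $\mu<\mathring{\boldsymbol{\uppi}}(f)$ and $\tau>f(\mu)$ enter), and then locating their zeros by monotonicity between consecutive poles --- is the standard argument and is essentially the one carried out in the cited reference; the parity bookkeeping for the leading coefficients that you flag as the remaining obstacle does go through in both cases ($h_0>0$ and $h_0=0$, where in the latter case the new leading coefficient $(-1)^d(\tau-h)$ is nonzero precisely because $\tau>f(\mu)\ge h$).
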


\subsection{Transformation of problems which removes the smallest eigenvalue} \label{ss:isospectral}

Now we introduce our first transformation between boundary value problems of the form~$\mathscr{P}(q, f, F)$ (depicted as an arrow pointing in the south-west direction in Figure~\ref{fig:ours}). This transformation decreases the indices of both $f$ and $F$ by one. Therefore, by applying it a sufficient number of times, one eventually arrives at a problem with only inverse square singularities at both endpoints (i.e., a dot located in the third quadrant in Figure~\ref{fig:ours}).

We define the transformation
\begin{equation*}
  \widehat{\mathbf{T}} \colon \mathscr{L}_2(0, \pi) \times \mathfrak{B}^2 \to \mathscr{L}_2(0, \pi) \times \mathfrak{B}^2,\ (q, f, F) \mapsto (\widehat{q}, \widehat{f}, \widehat{F})
\end{equation*}
by
\begin{equation} \label{eq:q_f_F_hat}
\begin{gathered}
  \widehat{f} := \widehat{\boldsymbol{\Theta}} (\lambda_0, f), \qquad \widehat{F} := \widehat{\boldsymbol{\Theta}} (\lambda_0, F), \\
  \widehat{q}(x) := q(x) - 2 \left( \frac{\varphi'(x, \lambda_0)}{\varphi(x, \lambda_0)} \right)' - \frac{2 (\ell_f + 1)}{x^2} - \frac{2 (\ell_F + 1)}{(\pi - x)^2}.
\end{gathered}
\end{equation}
It is immediate from Lemma~\ref{lem:L_2} that this transformation is well defined. The following result shows how exactly the eigenvalues and the norming constants of a boundary value problem are affected under the transformation $\widehat{\mathbf{T}}$.

\begin{theorem} \label{thm:transformation}
  If $(\widehat{q}, \widehat{f}, \widehat{F}) = \widehat{\mathbf{T}} (q, f, F)$ then the eigenvalues $\widehat{\lambda}_n$ and the norming constants $\widehat{\gamma}_n$ of the transformed problem $\mathscr{P}(\widehat{q}, \widehat{f}, \widehat{F})$ are related to the eigenvalues $\lambda_n$ and the norming constants $\gamma_n$ of the original problem $\mathscr{P}(q, f, F)$ by
\begin{equation*}
  \widehat{\lambda}_n = \lambda_{n+1}, \qquad \widehat{\gamma}_n = \frac{\gamma_{n+1}}{\lambda_{n+1} - \lambda_0}, \qquad n \ge 0.
\end{equation*}
\end{theorem}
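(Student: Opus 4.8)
The plan is to verify the Darboux (single commutation) factorization at the level of differential equations, keep track of the transformed regular solutions explicitly, and then read off the behaviour of the characteristic function and the norming constants. First I would set $\varphi_0(x) := \varphi(x,\lambda_0)$ and $w(x) := \varphi_0'(x)/\varphi_0(x)$, which is well defined and smooth on $(0,\pi)$ by Lemma~\ref{lem:no_zero}, and note that the Riccati identity~(\ref{eq:Riccati}) with $\lambda = \lambda_0$ shows $w' + w^2 = q + \ell_f(\ell_f+1)/x^2 + \ell_F(\ell_F+1)/(\pi-x)^2 - \lambda_0$. A direct computation then gives that the transformed potential $\widehat q$ from~(\ref{eq:q_f_F_hat}), together with the new angular terms $\ell_{\widehat f}(\ell_{\widehat f}+1)/x^2$ and $\ell_{\widehat F}(\ell_{\widehat F}+1)/(\pi-x)^2$, satisfies the analogous Riccati identity with the sign of $w$ reversed; here one uses $\ell_{\widehat f} = \ell_f \pm 1$ according to the two branches in the definition of $\widehat{\boldsymbol\Theta}$ (when $\ind f \ge 1$ the orbital number stays $-1$ but the index drops; when $\ind f \le 0$ it becomes $\boldsymbol\infty_{-\ind f}$, raising $\ell$ by one), and one checks the bookkeeping $(\ell_f+1) + (\ell_{\widehat f}+1) = $ the coefficient that appears in the correction term. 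The upshot, which is the algebraic heart of the matter, is the operator factorization: writing $\ell_\pm := \mp\,\du/\du x \pm w$ on the appropriate domains, the original Schrödinger operator (shifted by $\lambda_0$) equals $\ell_-\ell_+$ and the transformed one equals $\ell_+\ell_-$.

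Next I would transfer solutions. If $y(x,\lambda)$ solves~(\ref{eq:SL}) for the original problem then $\widehat y(x,\lambda) := y'(x,\lambda) - w(x)\,y(x,\lambda) = -\varphi_0(x)\bigl(y(x,\lambda)/\varphi_0(x)\bigr)'$ solves the transformed equation with the same $\lambda$, for $\lambda \neq \lambda_0$; conversely one recovers $y$ by quadrature, and $\varphi_0$ itself maps to $1/\varphi_0$, which does not lie in the relevant solution space at $\lambda_0$ — this is why the eigenvalue $\lambda_0$ disappears. I would then show that $\widehat y$ built from the \emph{left regular} solution $\varphi(x,\lambda)$ is, up to an explicit nonzero $\lambda$-dependent constant, the left regular solution $\widehat\varphi(x,\lambda)$ of $\mathscr P(\widehat q,\widehat f,\widehat F)$. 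On the nonsingular side this is a finite computation with the initial data $\varphi(0,\lambda)=f_\downarrow(\lambda)$, $\varphi'(0,\lambda)=-f_\uparrow(\lambda)$ and the formulas for $\widehat f_\uparrow,\widehat f_\downarrow$ from Lemma~\ref{lem:f_hat}, together with $w(0+) = -f(\lambda_0) = -f(\mu)$ (evaluating $\varphi'/\varphi$ at $0$); on the singular side one uses the leading asymptotics $\varphi(x,\lambda)\sim x^{\ell_f+1}/(2\ell_f+1)!!$ and $w(x) = (\ell_f+1)/x + o(1)$ to see that $\widehat\varphi$ inherits the correct leading power $x^{\ell_{\widehat f}+1}/(2\ell_{\widehat f}+1)!!$. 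The same is done for $\psi$ on the right. Combining, the characteristic functions are related by $\widehat\chi(\lambda) = c(\lambda)\,\chi(\lambda)/(\lambda-\lambda_0)$ with $c$ an explicit nonvanishing entire factor, forcing $\widehat\lambda_n = \lambda_{n+1}$; alternatively one can just argue directly that the map $y\mapsto \widehat y$ is a bijection between eigenfunctions with $\lambda_0$ removed.

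Finally, for the norming constants I would use~(\ref{eq:chi_beta_gamma}), valid for singular endpoints too as noted after that formula: $\widehat\chi'(\widehat\lambda_n) = \widehat\beta_n\widehat\gamma_n$ and $\chi'(\lambda_{n+1}) = \beta_{n+1}\gamma_{n+1}$. From $\psi(x,\lambda_n) = \beta_n\varphi(x,\lambda_n)$ and the explicit proportionality constants relating $\widehat\varphi,\widehat\psi$ to the transforms of $\varphi,\psi$, I get $\widehat\beta_n$ in terms of $\beta_{n+1}$; differentiating the relation $\widehat\chi(\lambda)(\lambda-\lambda_0) = c(\lambda)\chi(\lambda)$ at $\lambda = \lambda_{n+1}$ and dividing out gives $\widehat\gamma_n = \gamma_{n+1}/(\lambda_{n+1}-\lambda_0)$ after the constants $c(\lambda_{n+1})$ and the $\beta$-ratios cancel — this cancellation is exactly what pins down the clean factor $1/(\lambda_{n+1}-\lambda_0)$. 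The step I expect to be the main obstacle is the careful matching of normalizations of $\widehat\varphi$ (and $\widehat\psi$) with $\varphi' - w\varphi$ at a singular endpoint: there the initial-value description is replaced by an asymptotic one, so one must extract the precise constant from the next-order term in the expansion of $\varphi$ and of $w$, and verify the factor of $(2\ell+1)!!$ transforms correctly; on the nonsingular side everything reduces to the polynomial identities already recorded in Lemma~\ref{lem:f_hat}. Throughout, one should also double-check that the positivity hypotheses ($\lambda_0 < \mathring{\boldsymbol\uppi}(f)$, etc.) needed to place $(\lambda_0,f)\in\widehat\Sigma$ hold automatically because $\lambda_0$ is the smallest eigenvalue, so that $\widehat{\mathbf T}$ is indeed applicable — but this is precisely the content of the remark following~(\ref{eq:q_f_F_hat}), so it may be cited rather than reproved.
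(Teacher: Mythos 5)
Your proposal is correct and follows essentially the same route as the paper: transform the regular solutions by $y \mapsto y' - (\varphi'(x,\lambda_0)/\varphi(x,\lambda_0))\,y$, check separately that $\lambda_0$ is not an eigenvalue of the new problem, and combine the relation between the characteristic functions with the identity $\chi'(\lambda_n)=\beta_n\gamma_n$ from (\ref{eq:chi_beta_gamma}) to obtain the norming constants. The only cosmetic difference is that the paper normalizes $\widehat{\varphi}$ by the explicit factor $1/(\lambda_0-\lambda)$ and then reads off $\chi(\lambda)=(\lambda_0-\lambda)\widehat{\chi}(\lambda)$ from the Hadamard product representation (\ref{eq:product}), whereas you propose to get this relation from a direct Wronskian computation and let the resulting constants cancel.
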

\begin{proof}
A standard calculation shows that
\begin{equation*}
  \widehat{\varphi}(x, \lambda) := \begin{cases} \frac{1}{\lambda_0 - \lambda} \left( \varphi'(x, \lambda) - \frac{\varphi'(x, \lambda_0)}{\varphi(x, \lambda_0)} \varphi(x, \lambda) \right), & \lambda \ne \lambda_0, \\ - \varphi(x, \lambda_0) \left. \frac{\partial}{\partial\lambda} \left( \frac{\varphi'(x, \lambda)}{\varphi(x, \lambda)} \right) \right|_{\lambda = \lambda_0}, & \lambda = \lambda_0 \end{cases}
\end{equation*}
is the left regular solution of the problem $\mathscr{P}(\widehat{q}, \widehat{f}, \widehat{F})$. It is also straightforward to check that a number $\lambda \ne \lambda_0$ is an eigenvalue of $\mathscr{P}(\widehat{q}, \widehat{f}, \widehat{F})$ if and only if it is an eigenvalue of $\mathscr{P}(q, f, F)$. Therefore one only needs to verify that $\lambda_0$ is not an eigenvalue of $\mathscr{P}(\widehat{q}, \widehat{f}, \widehat{F})$ and this can easily be done by considering the general solution of the equation
\begin{equation} \label{eq:SL_hat}
  -y''(x) + \left( \frac{\ell_{\widehat{f}} (\ell_{\widehat{f}} + 1)}{x^2} + \frac{\ell_{\widehat{F}} (\ell_{\widehat{F}} + 1)}{(\pi - x)^2} + \widehat{q}(x) \right) y(x) = \lambda_0 y(x)
\end{equation}
and obtaining a contradiction (see the proof of \cite[Theorem 3.2]{G19a} for details).

Let us now turn to norming constants. A simple symmetry argument shows that a formula similar to the above, but with a minus sign, also holds for the right regular solutions of the problems $\mathscr{P}(q, f, F)$ and $\mathscr{P}(\widehat{q}, \widehat{f}, \widehat{F})$. Hence $\widehat{\psi}(x, \widehat{\lambda}_n) = - \beta_{n+1} \widehat{\varphi}(x, \widehat{\lambda}_n)$. On the other hand, the infinite product representation (\ref{eq:product}) implies that the characteristic functions of these two problems are related by the formula
\begin{equation} \label{eq:chi_hat}
  \chi(\lambda) = (\lambda_0 - \lambda) \widehat{\chi}(\lambda).
\end{equation}
In particular, $\chi'(\widehat{\lambda}_n) = (\lambda_0 - \widehat{\lambda}_n) \widehat{\chi}'(\widehat{\lambda}_n)$. Therefore, by (\ref{eq:chi_beta_gamma}), the norming constant $\widehat{\gamma}_n$ of $\mathscr{P}(\widehat{q}, \widehat{f}, \widehat{F})$ corresponding to $\widehat{\lambda}_n$ equals
\begin{equation*}
  \frac{\widehat{\chi}'(\widehat{\lambda}_n)}{-\beta_{n+1}} = \frac{\gamma_{n+1}}{\lambda_{n+1} - \lambda_0}.
\end{equation*}
\end{proof}

We will invert the action of the transformation $\widehat{\mathbf{T}}$ in the next subsection. However, as the above theorem shows, the information about the smallest eigenvalue $\lambda_0$ and the corresponding norming constant $\gamma_0$ is lost under this transformation. We will soon see that they can be given arbitrarily, as long as $\lambda_0$ is strictly smaller than the smallest eigenvalue of the problem $\mathscr{P}(\widehat{s}, \widehat{f}, \widehat{F})$ and $\gamma_0$ is positive. To see how these two numbers can help us, we observe from (\ref{eq:q_f_F_hat}) that one needs to know $\varphi(x, \lambda_0)$ to invert the action of $\widehat{\mathbf{T}}$. Since the function $1 / \varphi(x, \lambda_0)$ is also a solution of the equation (\ref{eq:SL_hat}), its Wronskian with $\widehat{\varphi}(x, \lambda_0)$ can be found by calculating its value at $0$ or its limit as $x \to 0$:
\begin{equation*}
  \mathcal{W} \left( \frac{1}{\varphi}, \widehat{\varphi} \right) = 1.
\end{equation*}
Similarly, we have
\begin{equation*}
  \mathcal{W} \left( \frac{1}{\varphi}, \widehat{\psi} \right) = \beta_0 \mathcal{W} \left( \frac{1}{\psi}, \widehat{\psi} \right) = \beta_0.
\end{equation*}
Therefore
\begin{equation*}
  \frac{1}{\varphi} = \frac{1}{\mathcal{W} \left( \widehat{\psi}, \widehat{\varphi} \right)} \left( \widehat{\psi} - \beta_0 \widehat{\varphi} \right).
\end{equation*}
It is this coefficient $\beta_0$ here that can be expressed in terms of $\lambda_0$ and $\gamma_0$. Indeed, by (\ref{eq:chi_beta_gamma}) and (\ref{eq:chi_hat}), we have
\begin{equation*}
  \beta_0 = \frac{\chi'(\lambda_0)}{\gamma_0} = - \frac{\widehat{\chi}(\lambda_0)}{\gamma_0}.
\end{equation*}

\subsection{Transformation of problems which adds a new eigenvalue} \label{ss:inverseisospectral}

We now turn to our second transformation between problems of the form~$\mathscr{P}(q, f, F)$ (depicted as an arrow pointing in the north-east direction in Figure~\ref{fig:ours}). This transformation inverts the action of the transformation $\widehat{\mathbf{T}}$ defined in the previous subsection (see Theorem~\ref{thm:inverse} below) and in particular increases the indices of both $f$ and $F$ by one. Therefore, by applying it a sufficient number of times, one eventually arrives at a problem without any singularities and with eigenparameter dependent (or independent) boundary conditions at both endpoints (i.e., a dot located in the first quadrant in Figure~\ref{fig:ours}).

Throughout this subsection $\mathring{\boldsymbol{\uplambda}}(q, f, F)$ and $\mathring{\boldsymbol{\upgamma}}(q, f, F)$ denote the smallest eigenvalue and the corresponding norming constant, respectively, of a problem $\mathscr{P}(q, f, F)$. We define the transformation
\begin{equation*}
  \widetilde{\mathbf{T}} \colon \widetilde{\mathcal{S}} \to \mathscr{L}_2(0, \pi) \times \mathfrak{B}^2,\ (\mu, \nu, q, f, F) \mapsto (\widetilde{q}, \widetilde{f}, \widetilde{F})
\end{equation*}
on the set
\begin{equation*}
  \widetilde{\mathcal{S}} := \left\{ (\mu, \nu, q, f, F) \in \mathbb{R} \times (0, +\infty) \times \mathscr{L}_2(0, \pi) \times \mathfrak{B}^2 \colon \mu < \mathring{\boldsymbol{\uplambda}}(q, f, F) \right\}
\end{equation*}
by
\begin{equation} \label{eq:q_f_F_tilde}
\begin{gathered}
  \widetilde{f} := \widetilde{\boldsymbol{\Theta}} \left( \mu, -\frac{u'(0)}{u(0)}, f \right), \qquad \widetilde{F} := \widetilde{\boldsymbol{\Theta}} \left( \mu, \frac{u'(\pi)}{u(\pi)}, F \right), \\
  \widetilde{q}(x) := q(x) - 2 \left( \frac{u'(x)}{u(x)} \right)' + \frac{2 (\ell_{\widetilde{f}} + 1)}{x^2} + \frac{2 (\ell_{\widetilde{F}} + 1)}{(\pi - x)^2},
\end{gathered}
\end{equation}
where
\begin{equation*}
  u(x) := \psi(x, \mu) + \frac{\chi(\mu)}{\nu} \varphi(x, \mu).
\end{equation*}
Strictly speaking, the values of the solution $u$ and its first derivative are not defined at singular endpoints, but the transformation $\widetilde{\boldsymbol{\Theta}}$ does not depend on its second argument in this case either. That the transformation $\widetilde{\mathbf{T}}$ is well defined follows from Lemma~\ref{lem:L_2} together with the obvious identity $\min \{ \ell_f, 0 \} = \ell_{\widetilde{f}} + 1$ for any $f \in \mathfrak{B}$.

The analogue of Theorem~\ref{thm:transformation} for the transformation $\widetilde{\mathbf{T}}$ now reads as follows.

\begin{theorem} \label{thm:inverse_transformation}
  If $(\widetilde{q}, \widetilde{f}, \widetilde{F}) = \widetilde{\mathbf{T}}(\mu, \nu, q, f, F)$ then the eigenvalues $\widetilde{\lambda}_n$ and the norming constants $\widetilde{\gamma}_n$ of the transformed problem $\mathscr{P}(\widetilde{q}, \widetilde{f}, \widetilde{F})$ are related to the eigenvalues $\lambda_n$ and the norming constants $\gamma_n$ of the original problem $\mathscr{P}(q, f, F)$ by
\begin{equation*}
  \widehat{\lambda}_0 = \mu, \qquad \widehat{\gamma}_0 = \nu, \qquad\quad \widehat{\lambda}_n = \lambda_{n-1}, \qquad \widehat{\gamma}_n = \gamma_{n-1} (\lambda_{n-1} - \mu), \qquad n \ge 1.
\end{equation*}
\end{theorem}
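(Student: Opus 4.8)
The plan is to prove Theorem~\ref{thm:inverse_transformation} by showing that $\widetilde{\mathbf{T}}$ is essentially the inverse of $\widehat{\mathbf{T}}$ on a suitable slice, and then reading off the eigenvalue and norming constant relations from Theorem~\ref{thm:transformation} (applied ``backwards''). First I would construct the left regular solution of the transformed problem $\mathscr{P}(\widetilde{q}, \widetilde{f}, \widetilde{F})$ explicitly in terms of the data of the original problem. The natural candidate, mirroring the formula for $\widehat{\varphi}$ in the proof of Theorem~\ref{thm:transformation} but now with the roles reversed, is built from the solution $u(x) = \psi(x,\mu) + (\chi(\mu)/\nu)\varphi(x,\mu)$: one sets
\begin{equation*}
  \widetilde{\varphi}(x, \lambda) := \left( \lambda - \mu \right) \varphi(x, \lambda) / u(x) + \text{(correction term from } \varphi'/\varphi - u'/u\text{)},
\end{equation*}
or, more cleanly, uses the Wronskian-type identity: since $1/u(x)$ solves the transformed equation~(\ref{eq:SL_hat}) (with $\mu$ in place of $\lambda_0$), and since by Lemma~\ref{lem:L_2} the function $u$ has no zeros in $(0,\pi)$ (it is a positive linear combination of $\varphi(\cdot,\mu)$ and $\psi(\cdot,\mu)$ because $\mu < \mathring{\boldsymbol{\uplambda}}$ forces $\chi(\mu)/\nu > 0$), the transformed potential $\widetilde q$ is well defined and the Darboux-type solution can be written down. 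One then checks that this candidate satisfies the correct initial conditions or asymptotics at the left endpoint as prescribed in Subsection~\ref{ss:solutions}, using the explicit form of $\widetilde{f} = \widetilde{\boldsymbol{\Theta}}(\mu, -u'(0)/u(0), f)$ together with the formulas for $\widetilde{f}_\uparrow$, $\widetilde{f}_\downarrow$ from Lemma~\ref{lem:f_hat}; the identity $\min\{\ell_f, 0\} = \ell_{\widetilde f} + 1$ is what makes the $x^{-2}$ terms in $\widetilde q$ match the index of $\widetilde f$.

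Next I would establish the spectral correspondence. Exactly as in Theorem~\ref{thm:transformation}, a direct computation shows that $\lambda \neq \mu$ is an eigenvalue of $\mathscr{P}(\widetilde q, \widetilde f, \widetilde F)$ if and only if it is an eigenvalue of $\mathscr{P}(q,f,F)$, and separately one verifies that $\mu$ \emph{is} an eigenvalue of the transformed problem (this is the new eigenvalue being added below the spectrum) by exhibiting $\varphi(x,\mu)/u(x)$---equivalently, a bounded solution---as an eigenfunction and confirming it satisfies both boundary conditions/asymptotics. The cleanest route, however, is to observe that $\widehat{\mathbf{T}}(\widetilde q, \widetilde f, \widetilde F) = (q, f, F)$: the transformations $\widehat{\boldsymbol{\Theta}}$ and $\widetilde{\boldsymbol{\Theta}}$ are inverse to each other by the identity $\widehat{\boldsymbol{\Theta}}(\mu, \widetilde{\boldsymbol{\Theta}}(\mu,\tau,f)) = f$ recalled in Subsection~\ref{ss:nevanlinna}, and the potential transformations compose to the identity once one checks that the eigenfunction of $\mathscr{P}(\widetilde q,\widetilde f,\widetilde F)$ for the smallest eigenvalue $\mu$ is precisely $\varphi(\cdot,\mu)/u(\cdot)$ up to a constant, so that applying $\widehat{\mathbf T}$ (which removes the smallest eigenvalue using that eigenfunction) undoes $\widetilde{\mathbf T}$. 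Granting this, Theorem~\ref{thm:transformation} applied to $(q,f,F) = \widehat{\mathbf{T}}(\widetilde q, \widetilde f, \widetilde F)$ gives $\lambda_n = \widetilde\lambda_{n+1}$ and $\gamma_n = \widetilde\gamma_{n+1}/(\widetilde\lambda_{n+1} - \mu)$ for $n \ge 0$, which rearranges to $\widetilde\lambda_n = \lambda_{n-1}$ and $\widetilde\gamma_n = \gamma_{n-1}(\lambda_{n-1} - \mu)$ for $n \ge 1$, with $\widetilde\lambda_0 = \mu$ the bottom eigenvalue.

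It remains to identify $\widetilde\gamma_0 = \nu$, which is where the parameter $\nu$ finally pays off. For this I would compute the characteristic function of the transformed problem: the infinite product representation~(\ref{eq:product}), together with the shift of indices $\ind\widetilde f = \ind f + 1$, $\ind\widetilde F = \ind F + 1$ and hence $\widetilde L = L + 1$, forces $\widetilde\chi(\lambda) = \chi(\lambda)/(\mu - \lambda)$ up to the correct normalization, so that $\widetilde\chi(\mu) = -\chi'(\mu)$ (the analogue of~(\ref{eq:chi_hat})). Combining $\widetilde\chi'(\mu) = \beta_{-1}^{\widetilde{}}\,\widetilde\gamma_0$ from~(\ref{eq:chi_beta_gamma}) with the value of the ``$\beta$-coefficient'' at $\mu$---which by construction of $u$ equals $\chi(\mu)/\nu$, since $u = \psi(\cdot,\mu) + (\chi(\mu)/\nu)\varphi(\cdot,\mu)$ and the eigenfunction condition $\widetilde\psi(\cdot,\mu) = \widetilde\beta_0\,\widetilde\varphi(\cdot,\mu)$ translates (via the Wronskian identities $\mathcal{W}(1/u, \widetilde\varphi) = 1$, $\mathcal{W}(1/u,\widetilde\psi) = \chi(\mu)/\nu$, exactly as in the displayed computation at the end of Subsection~\ref{ss:isospectral}) into $\widetilde\beta_0 = \chi(\mu)/\nu$---yields $\widetilde\gamma_0 = \widetilde\chi'(\mu)/\widetilde\beta_0$, and a short computation with the relation between $\widetilde\chi'(\mu)$ and the product~(\ref{eq:product}) gives $\widetilde\gamma_0 = \nu$. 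The main obstacle I expect is the bookkeeping at the singular endpoints: the solution $u$ and its derivative are literally undefined there, so every step involving $u'(0)/u(0)$ or Wronskians at $0$ must be justified by the appropriate limiting asymptotics from Subsection~\ref{ss:solutions} (the $O(x^{-\ell_f})$ behaviour and the Riccati-type control from Lemma~\ref{lem:L_2}), and one must check that $\widetilde{\boldsymbol{\Theta}}$'s independence of its second argument in the singular case is consistent with these limits---this is routine but must be done carefully in each of the index ranges $\ind f \ge 0$, $\ind f = -1$, $\ind f \le -2$.
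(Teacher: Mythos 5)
Your proposal follows essentially the same route as the paper: the paper derives Theorem~\ref{thm:inverse_transformation} precisely as a consequence of Theorem~\ref{thm:inverse} (the inverse relation $\widehat{\mathbf{T}}\widetilde{\mathbf{T}}=\mathrm{id}$), Theorem~\ref{thm:transformation} read backwards for the relations with $n\ge 1$, and the $\beta_0$--$\widehat{\chi}$ discussion at the end of Subsection~\ref{ss:isospectral} to identify $\widetilde{\gamma}_0=\nu$, which is exactly your plan. The only caveat is your claim that $\chi(\mu)/\nu>0$ makes $u$ a \emph{positive} combination of $\varphi$ and $\psi$ --- the product representation (\ref{eq:product}) actually gives $\chi(\mu)<0$ below the spectrum, so the zero-freeness of $u$ needs the slightly more careful argument via the reciprocal of a zero-free solution rather than Lemma~\ref{lem:L_2} verbatim; but this sign subtlety is present in the paper's own appeal to Lemma~\ref{lem:L_2} and does not change the structure of the argument.
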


This theorem is a consequence of the following result, together with Theorem~\ref{thm:transformation} and the discussion thereafter.

\begin{theorem} \label{thm:inverse}
The transformations $\widehat{\mathbf{T}}$ and $\widetilde{\mathbf{T}}$ are inverses of each other in the sense that if $(\widehat{q}, \widehat{f}, \widehat{F}) = \widehat{\mathbf{T}}(q, f, F)$ then $\widetilde{\mathbf{T}} \left( \mathring{\boldsymbol{\uplambda}}(q, f, F), \mathring{\boldsymbol{\upgamma}}(q, f, F), \widehat{q}, \widehat{f}, \widehat{F} \right) = (q, f, F)$ for all $(q, f, F) \in \mathscr{L}_2(0, \pi) \times \mathfrak{B}^2$, and conversely $\widehat{\mathbf{T}} \widetilde{\mathbf{T}}(\mu, \nu, q, f, F) = (q, f, F)$ for all $(\mu, \nu, q, f, F) \in \widetilde{\mathcal{S}}$.
\end{theorem}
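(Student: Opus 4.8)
The plan is to verify the two composition identities essentially by unwinding the definitions of $\widehat{\mathbf{T}}$ and $\widetilde{\mathbf{T}}$ and matching up the data on each side. Consider first the composition $\widetilde{\mathbf{T}} \circ \widehat{\mathbf{T}}$. Set $(\widehat{q}, \widehat{f}, \widehat{F}) = \widehat{\mathbf{T}}(q, f, F)$ and abbreviate $\mu := \mathring{\boldsymbol{\uplambda}}(q, f, F) = \lambda_0$ and $\nu := \mathring{\boldsymbol{\upgamma}}(q, f, F) = \gamma_0$. By Theorem~\ref{thm:transformation}, the smallest eigenvalue of $\mathscr{P}(\widehat{q}, \widehat{f}, \widehat{F})$ is $\widehat{\lambda}_0 = \lambda_1 > \lambda_0 = \mu$, so $(\mu, \nu, \widehat{q}, \widehat{f}, \widehat{F}) \in \widetilde{\mathcal{S}}$ and the composition is legitimate. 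Now I would compute the auxiliary solution $u$ from~(\ref{eq:q_f_F_tilde}): it is built from the right and left regular solutions $\widehat{\psi}$, $\widehat{\varphi}$ of $\mathscr{P}(\widehat{q}, \widehat{f}, \widehat{F})$ at the point $\mu$, weighted by $\widehat{\chi}(\mu)/\nu$. Using $\beta_0 = -\widehat{\chi}(\lambda_0)/\gamma_0 = -\widehat{\chi}(\mu)/\nu$ from the discussion after Theorem~\ref{thm:transformation}, one sees that $u = \widehat{\psi}(\cdot, \mu) - \beta_0 \widehat{\varphi}(\cdot, \mu)$, which by the Wronskian identity $1/\varphi = \mathcal{W}(\widehat{\psi}, \widehat{\varphi})^{-1}(\widehat{\psi} - \beta_0 \widehat{\varphi})$ is precisely a nonzero constant multiple of $1/\varphi(\cdot, \lambda_0)$. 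Since the logarithmic derivative $u'/u$ and the potential correction in~(\ref{eq:q_f_F_tilde}) are insensitive to scalar multiples of $u$, we get $u'/u = -\varphi'(\cdot, \lambda_0)/\varphi(\cdot, \lambda_0)$.

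With this in hand the three identities $(\widetilde{q}, \widetilde{f}, \widetilde{F}) = (q, f, F)$ follow. For the potential: substituting $u'/u = -\varphi'/\varphi$ into the formula for $\widetilde{q}$ in~(\ref{eq:q_f_F_tilde}) and comparing with the formula for $\widehat{q}$ in~(\ref{eq:q_f_F_hat}), the second-derivative terms $-2(u'/u)' = 2(\varphi'/\varphi)'$ cancel the corresponding term in $\widehat{q}$, and the inverse-square correction terms match because $\ell_{\widetilde{f}} + 1 = \min\{\ell_f, 0\} = \ell_{\widehat{f}} + 1$, so that $\widetilde{q} = q$. For the boundary-condition data: we must check $\widetilde{\boldsymbol{\Theta}}(\mu, -u'(0)/u(0), \widehat{f}) = f$ and likewise at $\pi$. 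Since $-u'(0)/u(0) = \varphi'(0,\lambda_0)/\varphi(0,\lambda_0) = -f(\lambda_0) = -f(\mu)$ when $\ind f \ge 0$ (and the second argument is irrelevant when the endpoint is singular), this reduces exactly to the identity $\widetilde{\boldsymbol{\Theta}}(\mu, -f(\mu), \widehat{\boldsymbol{\Theta}}(\mu, f)) = f$ already recorded in Subsection~\ref{ss:nevanlinna}; the case $\ind f = -1$ (Dirichlet), where $\widehat{f} = \boldsymbol{\infty}_1$ and one must recover $f = \boldsymbol{\infty}_0$, and the singular cases $\ind f \le -2$ are handled by the same bookkeeping of indices in Lemma~\ref{lem:f_hat}.

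For the reverse composition $\widehat{\mathbf{T}} \circ \widetilde{\mathbf{T}}$, I would argue analogously but in the opposite direction. Given $(\mu, \nu, q, f, F) \in \widetilde{\mathcal{S}}$, set $(\widetilde{q}, \widetilde{f}, \widetilde{F}) = \widetilde{\mathbf{T}}(\mu, \nu, q, f, F)$. One checks (as in the proof of Theorem~\ref{thm:transformation}, via the explicit left regular solution and the contradiction argument using~(\ref{eq:SL_hat}), or equivalently the argument in~\cite[Theorem 3.2]{G19a}) that $u(x)$ is, up to normalization, the eigenfunction of $\mathscr{P}(\widetilde{q}, \widetilde{f}, \widetilde{F})$ at its smallest eigenvalue $\mu$, with corresponding norming constant $\nu$; this is the content of Theorem~\ref{thm:inverse_transformation}, which we may invoke here since the excerpt states it follows from Theorem~\ref{thm:inverse} together with Theorem~\ref{thm:transformation}—so to avoid circularity I would instead verify directly that $\mu$ is the smallest eigenvalue of $\mathscr{P}(\widetilde{q},\widetilde{f},\widetilde{F})$ with eigenfunction $u$. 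Then applying $\widehat{\mathbf{T}}$ to $(\widetilde{q}, \widetilde{f}, \widetilde{F})$ uses exactly this eigenfunction $u$ in the role of $\varphi(\cdot, \mathring{\boldsymbol{\uplambda}})$, the logarithmic derivatives cancel as before, the index bookkeeping $\ell_{\widehat{\widetilde f}} = \ell_f$ works out via $\widehat{\boldsymbol{\Theta}}(\mu, \widetilde{\boldsymbol{\Theta}}(\mu, \tau, f)) = f$, and one recovers $(q, f, F)$.

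The main obstacle is the identification of $u$ with a scalar multiple of $1/\varphi(\cdot,\lambda_0)$ (respectively, with the smallest eigenfunction of the $\widetilde{\mathbf{T}}$-transformed problem): this requires care with the Wronskian normalizations at the possibly-singular endpoints, where $u$, $\varphi$, $\psi$ are only defined asymptotically rather than pointwise, and with checking that the defining inequality $\tau > f(\mu)$ on $\widetilde{\Sigma}$ (needed for $\widetilde{\boldsymbol{\Theta}}$ to land in $\mathfrak{B}$) is met—here $\tau = -u'(0)/u(0)$ and one needs the strict inequality, which ultimately comes from $\mu < \mathring{\boldsymbol{\uplambda}}$ and the nonoscillation of $u$ guaranteed by Lemma~\ref{lem:no_zero} and Lemma~\ref{lem:L_2}. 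Once $u$ is correctly identified, everything else is a matter of substituting into~(\ref{eq:q_f_F_hat}) and~(\ref{eq:q_f_F_tilde}) and quoting the already-established inverse relations for $\widehat{\boldsymbol{\Theta}}$ and $\widetilde{\boldsymbol{\Theta}}$.
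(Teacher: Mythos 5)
Your argument follows the route the paper intends: the paper omits the proof, deferring to \cite{G17}, but the Wronskian computation at the end of Subsection~\ref{ss:isospectral} (namely $\mathcal{W}(1/\varphi,\widehat{\varphi})=1$, $\beta_0=-\widehat{\chi}(\lambda_0)/\gamma_0$, hence $\widehat{\psi}-\beta_0\widehat{\varphi}=u\propto 1/\varphi(\cdot,\lambda_0)$) is exactly your key identification, after which everything reduces to the stated inverse relations for $\widehat{\boldsymbol{\Theta}}$ and $\widetilde{\boldsymbol{\Theta}}$ and the cancellation of the logarithmic-derivative and inverse-square terms. One slip to correct in your second composition: the eigenfunction of $\mathscr{P}(\widetilde{q},\widetilde{f},\widetilde{F})$ at its smallest eigenvalue $\mu$ is a constant multiple of $1/u$, not of $u$, and only with $\widetilde{\varphi}(\cdot,\mu)\propto 1/u$ does the term $-2\left(\widetilde{\varphi}'/\widetilde{\varphi}\right)'=+2\left(u'/u\right)'$ cancel the $-2\left(u'/u\right)'$ already present in $\widetilde{q}$ --- which is what your own conclusion (``the logarithmic derivatives cancel as before'') implicitly uses.
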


The proofs of both theorems are similar to those in \cite[Section 3.3]{G17} and are thus omitted.

\section{Some applications} \label{sec:applications}

As can be seen from Figure~\ref{fig:ours}, by applying one of the transformations defined in the preceding section to any boundary value problem of the form $\mathscr{P}(q, f, F)$ a sufficient number of times, one eventually arrives at either a problem with inverse square singularities at both endpoints or a problem with eigenparameter dependent boundary conditions at both endpoints (i.e., without any singularities). Therefore, these transformations allow one to transfer almost any spectral result to $\mathscr{P}(q, f, F)$ either from problems with eigenparameter dependent boundary conditions or from those with inverse square singularities.

In this section we illustrate these possibilities by generalizing a number of direct and inverse spectral results obtained in~\cite{G17} for problems with eigenparameter dependent boundary conditions to problems with inverse square singularities at one or both endpoints. An exception is made in the second half of the final subsection, where we go the other way around and extend an inverse spectral result from problems with inverse square singularities to problems of the form $\mathscr{P}(q, f, F)$.

Our basic strategy in the first three subsections can be described as follows. Let $\mathscr{P}(q, f, F)$ be an arbitrary problem. We may assume without loss of generality that $K := \max \{ \ell_f, \ell_F \} \ge 1$, since otherwise both of the endpoints would be nonsingular and such problems were already considered in~\cite{G17}. Let $\lambda_{-K} < \ldots < \lambda_{-1}$ be arbitrary numbers such that $\lambda_{-1} < \lambda_0$ and $\gamma_{-K}$, $\ldots$, $\gamma_{-1}$ be arbitrary positive numbers. We set $(q^{(0)}, f^{(0)}, F^{(0)}) := (q, f, F)$ and define the chain of problems $\mathscr{P}(q^{(k)}, f^{(k)}, F^{(k)})$ inductively by
\begin{equation} \label{eq:P_k}
  (q^{(k)}, f^{(k)}, F^{(k)}) := \widehat{\mathbf{T}} (\lambda_{-k}, \gamma_{-k}, q^{(k-1)}, f^{(k-1)}, F^{(k-1)}), \qquad k = 1, 2, \ldots, K.
\end{equation}
Then the last problem $\mathscr{P}(q^{(K)}, f^{(K)}, F^{(K)})$ has no singular endpoints and is thus of the form considered in~\cite{G17}. Now it only remains to observe that the original problem $\mathscr{P}(q, f, F)$ can be reconstructed as $(q, f, F) := \widehat{\mathbf{T}}^K (q^{(K)}, f^{(K)}, F^{(K)})$. In Subsection~\ref{ss:byspectraldata} we consider similar chains consisting not of problems but their supposed eigenvalues and norming constants.

As already mentioned in the introduction, in order to formulate various spectral results in a unified manner, we associated (apart from the index) two other real numbers to every rational Herglotz--Nevanlinna function. We now extend their definition to arbitrary $f \in \mathfrak{B}$ by setting
\begin{equation} \label{eq:omega_Omega}
  \omega_1 = \begin{cases} \frac{1}{h_0}, & \ind f \ge 0 \text{ is odd}, \\ - h, & \ind f \ge 0 \text{ is even}, \\ - \frac{\ell_f (\ell_f + 1)}{2 \pi}, & \ind f \le -1, \end{cases} \quad \omega_2 = \begin{cases} \frac{h}{h_0} - \sum_{k=1}^d h_k, & \ind f \ge 0 \text{ is odd}, \\ - \sum_{k=1}^d h_k, & \ind f \ge 0 \text{ is even}, \\ - \frac{\ell_f^2 (\ell_f + 1)^2}{8 \pi^2}, & \ind f \le -1. \end{cases}
\end{equation}
We define $\Omega_1$ and $\Omega_2$ similarly for the right endpoint.

\begin{remark} \label{rem:omega}
  In~\cite{G17} we associated a monic polynomial $\boldsymbol{\upomega}_f$ to every rational Herglotz--Nevanlinna function $f$ and defined $\omega_1$ and $\omega_2$ as respectively the second and third coefficients of this polynomial. Although we used only these two coefficients of the polynomial in that paper, it subsequently turned out that the whole polynomial $\boldsymbol{\upomega}_f$ is useful in some spectral problems~\cite{G19c}. It would be interesting to generalize this polynomial to our current setting.
\end{remark}

\subsection{Asymptotics of eigenvalues and norming constants} \label{ss:asymptotics}

We have shown in \cite[Subsection 4.1]{G17} that the eigenvalues and the norming constants of a boundary value problem $\mathscr{P}(q, f, F)$ with no singular endpoints (i.e., $\min \{ \ind f, \ind F \} \ge -1$) obey the asymptotics
\begin{equation} \label{eq:lambda}
  \sqrt{\lambda_n} = n - \frac{\ind f + \ind F}{2} + \frac{1}{\pi n} \left( \frac{1}{2} \int_0^\pi q(x) \,\du x + \omega_1 + \Omega_1 \right) + \ell_2 \left( \frac{1}{n} \right)
\end{equation}
and
\begin{equation} \label{eq:gamma}
  \gamma_n = \frac{\pi}{2} \left( n - \frac{\ind f + \ind F}{2} \right)^{2 \ind f} \left( 1 + \ell_2 \left( \frac{1}{n} \right) \right)
\end{equation}
respectively. Our aim in this subsection is to demonstrate that the same asymptotics also hold for all pairs $f$, $F \in \mathfrak{B}$, with $\omega_1$ and $\Omega_1$ defined as in~(\ref{eq:omega_Omega}). As a preliminary result, we show that the expression in parentheses on the right-hand side of~(\ref{eq:lambda}) is invariant under the transformation $\widehat{\mathbf{T}}$.

\begin{lemma} \label{lem:q_omega_Omega}
  If $(\widehat{q}, \widehat{f}, \widehat{F}) = \widehat{\mathbf{T}} (q, f, F)$ then
\begin{equation*}
  \frac{1}{2} \int_0^\pi q(x) \,\du x + \omega_1 + \Omega_1 = \frac{1}{2} \int_0^\pi \widehat{q}(x) \,\du x + \widehat{\omega}_1 + \widehat{\Omega}_1,
\end{equation*}
where $\widehat{\omega}_1$ and $\widehat{\Omega}_1$ are defined as in~(\ref{eq:omega_Omega}) with $f$ and $F$ replaced by $\widehat{f}$ and $\widehat{F}$ respectively.
\end{lemma}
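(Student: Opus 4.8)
The plan is to prove the identity by tracking how each of the three summands on the left-hand side changes under $\widehat{\mathbf{T}}$. The transformation acts separately on the left and right endpoints through $\widehat{\boldsymbol{\Theta}}(\lambda_0, \cdot)$, and it modifies $q$ by the explicit formula in~(\ref{eq:q_f_F_hat}), so it is natural to split the desired equality into a ``left contribution'', a ``right contribution'', and the integral term. Concretely, I would first show that
\begin{equation*}
  \frac{1}{2} \int_0^\pi \left( q(x) - \widehat{q}(x) \right) \,\du x = (\omega_1 - \widehat{\omega}_1) + (\Omega_1 - \widehat{\Omega}_1),
\end{equation*}
and I would establish this by computing $\int_0^\pi (q - \widehat{q})$ from~(\ref{eq:q_f_F_hat}) and matching the part coming from the neighbourhood of $x = 0$ against $\omega_1 - \widehat{\omega}_1$ and the part coming from $x = \pi$ against $\Omega_1 - \widehat{\Omega}_1$.

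For the integral, from~(\ref{eq:q_f_F_hat}) we have
\begin{equation*}
  q(x) - \widehat{q}(x) = 2 \left( \frac{\varphi'(x, \lambda_0)}{\varphi(x, \lambda_0)} \right)' + \frac{2 (\ell_f + 1)}{x^2} + \frac{2 (\ell_F + 1)}{(\pi - x)^2}.
\end{equation*}
By Lemma~\ref{lem:L_2} the bracketed sum $\left( \varphi'/\varphi \right)' + (\ell_f+1)/x^2 + (\ell_F+1)/(\pi-x)^2$ lies in $\mathscr{L}_2(0,\pi)$, hence is integrable, so $\int_0^\pi (q - \widehat{q})$ makes sense as
\begin{equation*}
  2 \int_0^\pi \left[ \left( \frac{\varphi'(x, \lambda_0)}{\varphi(x, \lambda_0)} \right)' + \frac{\ell_f + 1}{x^2} + \frac{\ell_F + 1}{(\pi - x)^2} \right] \,\du x.
\end{equation*}
I would evaluate this by taking the limit of $\int_\varepsilon^{\pi-\varepsilon}$ as $\varepsilon \to 0$: the exact derivative integrates to $\varphi'(\pi-\varepsilon,\lambda_0)/\varphi(\pi-\varepsilon,\lambda_0) - \varphi'(\varepsilon,\lambda_0)/\varphi(\varepsilon,\lambda_0)$, while the two inverse-square terms integrate to $(\ell_f+1)(1/\varepsilon - 1/\pi) + (\ell_F+1)(1/\varepsilon - 1/\pi)$ (after symmetrizing). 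Near $x = 0$, the asymptotics $\varphi'/\varphi = (\ell_f+1)/x + o(1)$ (stated in Subsection~\ref{ss:solutions} for $\ell_f \ge 0$, and elementary from the initial conditions and analyticity for $\ell_f = -1$, the only remaining case since $\widehat{\mathbf{T}}$ need only be analyzed where it changes the singularity structure) cancels the $\ell_f+1$ pole at $\varepsilon$; a symmetric statement near $x = \pi$ cancels the $\ell_F+1$ pole. What survives is a finite combination of the constants $-(\ell_f+1)/\pi$, $-(\ell_F+1)/\pi$ together with boundary values at the two endpoints.

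The crux, and the step I expect to be the main obstacle, is matching these surviving endpoint terms against $\omega_1 - \widehat{\omega}_1$ (for the left) and $\Omega_1 - \widehat{\Omega}_1$ (for the right). This requires a careful case analysis according to $\ind f$, using the three-way definition of $\omega_1$ in~(\ref{eq:omega_Omega}) and the fact, from Lemma~\ref{lem:f_hat}, that $\ind \widehat{f} = \ind f - 1$ and $\ell_{\widehat f} = -1 - \min\{0, \ind f - 1\}$. There are really only a few relevant regimes: $\ind f \ge 2$ (where $\omega_1 = 1/h_0$ or $-h$ and $\widehat{\omega}_1$ is the analogous quantity for $\widehat f$, and one invokes the explicit formulas for $\widehat f_\uparrow, \widehat f_\downarrow$ in Lemma~\ref{lem:f_hat} together with the initial data $\varphi(0,\lambda) = f_\downarrow(\lambda)$, $\varphi'(0,\lambda) = -f_\uparrow(\lambda)$ to read off the relevant coefficients — this is exactly the computation already done in \cite[Subsection~4.1]{G17}); the borderline cases $\ind f = 1$ and $\ind f = 0$ where $\widehat f$ becomes $\boldsymbol{\infty}_0$ or $\boldsymbol{\infty}_1$ and $\omega_1$ switches from the rational formula to $-\ell_{\widehat f}(\ell_{\widehat f}+1)/(2\pi)$; and the genuinely new range $\ind f \le -1$, where $\omega_1 = -\ell_f(\ell_f+1)/(2\pi)$ and $\widehat{\omega}_1 = -\ell_{\widehat f}(\ell_{\widehat f}+1)/(2\pi) = -(\ell_f+1)(\ell_f+2)/(2\pi)$, so that $\omega_1 - \widehat{\omega}_1 = \bigl(\ell_f(\ell_f+1) - (\ell_f+1)(\ell_f+2)\bigr)/(2\pi) = -(\ell_f+1)/\pi$, which is precisely the constant that emerged from the integral at the left endpoint. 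The same bookkeeping on the right gives $\Omega_1 - \widehat{\Omega}_1 = -(\ell_F+1)/\pi$. Adding the two endpoint contributions to $\frac{1}{2}\int_0^\pi(q-\widehat q) = -(\ell_f+1)/\pi - (\ell_F+1)/\pi + (\text{contributions already absorbed into the rational cases})$ yields the claimed invariance. I would organize the write-up so that the rational-index cases simply cite \cite{G17}, leaving only the $\ind f \le -1$ computation — a one-line arithmetic identity for $\omega_1 - \widehat{\omega}_1$ and a short limit computation for the integral — to be carried out in detail here.
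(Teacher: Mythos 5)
Your strategy is exactly the paper's: write $\tfrac12\int_0^\pi(q-\widehat q)$ as the boundary term of the antiderivative $\varphi'/\varphi-(\ell_f+1)/x+(\ell_F+1)/(\pi-x)$, cancel the poles at the endpoints using the asymptotics of $\varphi'/\varphi$, and match the surviving endpoint constants against the change in $\omega_1$ and $\Omega_1$ by a case analysis on $\ind f$. The evaluation of the integral is right (note only that the cancellation at $x=\pi$ uses that $\varphi(\cdot,\lambda_0)$ is the ground-state eigenfunction, hence proportional to $\psi(\cdot,\lambda_0)$, so that $\varphi'/\varphi\sim-(\ell_F+1)/(\pi-x)$ there; this deserves a sentence).

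However, the write-up contains two compensating sign errors, and as stated the chain of identities proves the wrong statement. The lemma rearranges to
\begin{equation*}
  \frac{1}{2} \int_0^\pi \left( q(x) - \widehat{q}(x) \right) \,\du x = (\widehat{\omega}_1 - \omega_1) + (\widehat{\Omega}_1 - \Omega_1),
\end{equation*}
not $(\omega_1-\widehat{\omega}_1)+(\Omega_1-\widehat{\Omega}_1)$ as you wrote. Correspondingly, in the case $\ind f\le-1$ you compute the difference by in effect dropping the overall minus signs in~(\ref{eq:omega_Omega}): since $\omega_1=-\ell_f(\ell_f+1)/(2\pi)$ and $\widehat{\omega}_1=-(\ell_f+1)(\ell_f+2)/(2\pi)$, one has $\omega_1-\widehat{\omega}_1=\bigl((\ell_f+1)(\ell_f+2)-\ell_f(\ell_f+1)\bigr)/(2\pi)=+(\ell_f+1)/\pi$, i.e.\ $\widehat{\omega}_1-\omega_1=-(\ell_f+1)/\pi$, and it is the latter quantity that the left endpoint of the integral actually produces. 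The two flips cancel, so your final numbers agree, but fixing either one alone breaks the match; moreover, in the rational regime the value you would import from \cite{G17}, namely $\widehat{\omega}_1-\omega_1=f(\lambda_0)=-\varphi'(0,\lambda_0)/\varphi(0,\lambda_0)$, is consistent only with the corrected version of the intermediate identity. Once the signs are straightened out, your argument coincides with the paper's proof.
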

\begin{proof}
It is easy to check that
\begin{equation*}
  \widehat{\omega}_1 - \omega_1 = \begin{cases} f(\lambda_0), & \ind f \ge 0, \\ - \frac{\ell_f + 1}{\pi}, & \ind f \le -1 \end{cases}
\end{equation*}
and similarly for the right endpoint (see the proof of~\cite[Lemma 4.1]{G17} for details). On the other hand, from~(\ref{eq:q_f_F_hat}) we have
\begin{equation*}
\begin{aligned}
  \frac{1}{2} \int_0^\pi \left( q(x) - \widehat{q}(x) \right) \,\du x = & \left. \left( \frac{\varphi'(x, \lambda_0)}{\varphi(x, \lambda_0)} - \frac{\ell_f + 1}{x} + \frac{\ell_F + 1}{\pi - x} \right) \right|_0^{\pi} \\
  = & \left. \left( \frac{\varphi'(x, \lambda_0)}{\varphi(x, \lambda_0)} + \frac{\ell_F + 1}{\pi - x} \right) \right|_{x=\pi} - \frac{\ell_F + 1}{\pi} \\
  & - \left. \left( \frac{\varphi'(x, \lambda_0)}{\varphi(x, \lambda_0)} - \frac{\ell_f + 1}{x} \right) \right|_{x=0} - \frac{\ell_f + 1}{\pi},
\end{aligned}
\end{equation*}
where the values at singular endpoints should be understood as limits. Now it only remains to observe that
\begin{equation*}
  \left. \left( \frac{\varphi'(x, \lambda_0)}{\varphi(x, \lambda_0)} - \frac{\ell_f + 1}{x} \right) \right|_{x=0} + \frac{\ell_f + 1}{\pi} = \omega_1 - \widehat{\omega}_1
\end{equation*}
and similarly for the right endpoint.
\end{proof}

Consider now the chain of problems~(\ref{eq:P_k}). The above lemma, Theorem \ref{thm:transformation}, and the obvious identities $\ind f^{(k)} - \ind f^{(k-1)} = \ind F^{(k)} - \ind F^{(k-1)} = 1$ show that if the asymptotics (\ref{eq:lambda}) and (\ref{eq:gamma}) hold for some problem of this chain then they also hold for the preceding one. Since the last problem $\mathscr{P}(q^{(K)}, f^{(K)}, F^{(K)})$ has no singular endpoints and thus its eigenvalues and norming constants satisfy (\ref{eq:lambda}) and (\ref{eq:gamma}), the same asymptotics also hold for all the other problems of the chain~(\ref{eq:P_k}), and in particular for the original problem $\mathscr{P}(q, f, F)$. We formulate this result as a theorem.

\begin{theorem} \label{thm:asymptotics}
  The eigenvalues and the norming constants of the problem $\mathscr{P}(q, f, F)$ satisfy the asymptotics (\ref{eq:lambda}) and (\ref{eq:gamma}).
\end{theorem}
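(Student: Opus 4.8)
The proof of Theorem~\ref{thm:asymptotics} is already implicit in the discussion preceding its statement, so my task is really to assemble the pieces into a clean argument. The plan is to reduce the general case to the non-singular case treated in \cite{G17} by descending along the chain~(\ref{eq:P_k}), and then to lift the asymptotics back up one problem at a time, checking at each step that the leading-order and first-order terms transform correctly. The two formulas to be tracked are (\ref{eq:lambda}) for the eigenvalues and (\ref{eq:gamma}) for the norming constants.

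First I would invoke the hypothesis-free reduction: put $K := \max\{\ell_f, \ell_F\}$, and if $K \le 0$ the problem has no singular endpoint and the result is exactly \cite[Subsection 4.1]{G17}, so assume $K \ge 1$. Fix arbitrary $\lambda_{-K} < \cdots < \lambda_{-1} < \lambda_0$ and arbitrary positive $\gamma_{-K}, \dots, \gamma_{-1}$, and form the chain~(\ref{eq:P_k}) ending at $\mathscr{P}(q^{(K)}, f^{(K)}, F^{(K)})$, which by construction has $\ell_{f^{(K)}} = \ell_{F^{(K)}} = -1$, i.e.\ no singular endpoint. Hence (\ref{eq:lambda}) and (\ref{eq:gamma}) hold for that terminal problem. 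The inductive step is to show that if these asymptotics hold for $\mathscr{P}(q^{(k)}, f^{(k)}, F^{(k)})$ then they hold for $\mathscr{P}(q^{(k-1)}, f^{(k-1)}, F^{(k-1)})$; applying this $K$ times yields the claim for $\mathscr{P}(q, f, F) = \mathscr{P}(q^{(0)}, f^{(0)}, F^{(0)})$.

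For the inductive step, write $(\widehat q, \widehat f, \widehat F) := \widehat{\mathbf{T}}(q, f, F)$ for a generic link of the chain (so $\widehat\cdot$ plays the role of the $k$-th problem and the un-hatted one the $(k-1)$-st). By Lemma~\ref{lem:f_hat} we have $\ind \widehat f = \ind f - 1$ and $\ind \widehat F = \ind F - 1$, so $\ind \widehat f + \ind \widehat F = \ind f + \ind F - 2$. Theorem~\ref{thm:transformation} gives $\lambda_n = \widehat\lambda_{n-1}$ and $\gamma_n = \widehat\gamma_{n-1}(\widehat\lambda_{n-1} - \lambda_0) = \widehat\gamma_{n-1}(\lambda_n - \lambda_0)$ for $n \ge 1$ (the finitely many indices $n = 0, \dots$ that are ``added back'' do not affect an $\ell_2(1/n)$ statement). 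For the eigenvalue asymptotics, substitute $n \mapsto n-1$ into the assumed expansion of $\sqrt{\widehat\lambda_{n-1}}$ and use $(n-1) - \tfrac{\ind\widehat f + \ind\widehat F}{2} = n - \tfrac{\ind f + \ind F}{2}$; the $\tfrac{1}{\pi n}$-coefficient then matches (\ref{eq:lambda}) precisely by Lemma~\ref{lem:q_omega_Omega}, which asserts that $\tfrac12\int_0^\pi q + \omega_1 + \Omega_1$ is $\widehat{\mathbf{T}}$-invariant. For the norming constants, $\gamma_n = \widehat\gamma_{n-1}(\lambda_n - \lambda_0)$ combined with the assumed form of $\widehat\gamma_{n-1}$, the relation $2\ind\widehat f = 2\ind f - 2$, the already-established eigenvalue asymptotics $\lambda_n - \lambda_0 = n^2(1 + \ell_2(1/n))$, and the index bookkeeping $(n-1) - \tfrac{\ind\widehat f + \ind\widehat F}{2} = n - \tfrac{\ind f + \ind F}{2}$ reproduces (\ref{eq:gamma}): the factor $\lambda_n - \lambda_0$ supplies the missing two powers of $(n - \tfrac{\ind f + \ind F}{2})$. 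One must check that the error terms multiply correctly, but since $\ell_2(1/n) \cdot (1 + \ell_2(1/n)) = \ell_2(1/n)$ and a fixed shift of index preserves $\ell_2$-membership, this is routine.

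The only genuinely delicate point is the bookkeeping in Lemma~\ref{lem:q_omega_Omega}, which is where the specific choice of $\omega_1$ in~(\ref{eq:omega_Omega}) for the singular case ($\omega_1 = -\ell_f(\ell_f+1)/(2\pi)$) is forced: one needs the jump $\widehat\omega_1 - \omega_1$ together with the boundary terms of $\tfrac12\int_0^\pi(q - \widehat q)$ coming from the explicit formula for $\widehat q$ in~(\ref{eq:q_f_F_hat}) to cancel, and this uses the first-order term of the Riccati-type asymptotics for $\varphi'/\varphi$ near a singular endpoint. But Lemma~\ref{lem:q_omega_Omega} is already proved in the excerpt, so in the write-up this reduces to a citation. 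Beyond that, the argument is a finite induction gluing together Theorem~\ref{thm:transformation}, Lemma~\ref{lem:f_hat}, Lemma~\ref{lem:q_omega_Omega}, and the base case from \cite{G17}; I do not anticipate any real obstacle.
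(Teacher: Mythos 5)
Your proposal is correct and follows essentially the same route as the paper: reduce to the non-singular case via the chain~(\ref{eq:P_k}), invoke \cite{G17} for the terminal problem $\mathscr{P}(q^{(K)}, f^{(K)}, F^{(K)})$, and propagate (\ref{eq:lambda})--(\ref{eq:gamma}) back down the chain one link at a time using Theorem~\ref{thm:transformation}, Lemma~\ref{lem:f_hat}, and Lemma~\ref{lem:q_omega_Omega}. The only blemishes are notational: since the chain applies $\widetilde{\mathbf{T}}$, one has $(q^{(k-1)}, f^{(k-1)}, F^{(k-1)}) = \widehat{\mathbf{T}}(q^{(k)}, f^{(k)}, F^{(k)})$, so in your generic link the \emph{hatted} problem is the $(k-1)$-st and the un-hatted one the $k$-th (the reverse of your parenthetical), and $\lambda_n - \lambda_0$ should be $\left(n - \tfrac{\ind f + \ind F}{2}\right)^2\left(1 + O(1/n^2)\right)$ rather than $n^2\left(1 + \ell_2(1/n)\right)$ --- but because the shift-and-multiply relations between the two spectral sequences are equivalences, the algebra you carry out does establish the implication in the direction your (correctly stated) induction requires.
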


\subsection{Oscillation of eigenfunctions} \label{ss:oscillation}

The Sturm oscillation theorem says that the number of zeros in $(0, \pi)$ of an eigenfunction of the Sturm--Liouville problem with constant boundary conditions equals the index of the corresponding eigenvalue. In the case of a boundary value problem $\mathscr{P}(q, f, F)$ with rational Herglotz--Nevanlinna functions $f$ and $F$, one needs to take into account the number of poles of these functions \cite[Theorem 4.4]{G17}. In this subsection, we generalize this result to the case of arbitrary $f$, $F \in \mathfrak{B}$.

To this end, we first prove that if $\varphi$ and $\widehat{\varphi}$ are the left regular solutions of two problems $\mathscr{P}(q, f, F)$ and $\mathscr{P}(\widehat{q}, \widehat{f}, \widehat{F})$ with $(\widehat{q}, \widehat{f}, \widehat{F}) = \widehat{\mathbf{T}} (q, f, F)$ then the difference between the number of zeros of $\varphi(x, \lambda_n)$ and $\widehat{\varphi}(x, \lambda_n)$ can be written as $1 + \boldsymbol{\Pi}_{\widehat{f}}(\lambda_n) - \boldsymbol{\Pi}_f(\lambda_n) + \boldsymbol{\Pi}_{\widehat{F}}(\lambda_n) - \boldsymbol{\Pi}_F(\lambda_n)$ for every $n \ge 1$. This is established by using the identities
\begin{equation*}
  \left( \widehat{\varphi}(x, \lambda_n) \varphi(x, \lambda_0) \right)' = \varphi(x, \lambda_n) \varphi(x, \lambda_0), \qquad \left( \frac{\varphi(x, \lambda_n)}{\varphi(x, \lambda_0)} \right)' = (\lambda_0 - \lambda_n) \frac{\widehat{\varphi}(x, \lambda_n)}{\varphi(x, \lambda_0)}
\end{equation*}
and showing that the function $\varphi(x, \lambda_n)$ has exactly one zero in each of the intervals $(0, x_1)$, $(x_1, x_2)$, $\ldots$, $(x_N, \pi)$, where $x_1$, $\ldots$, $x_N$ are the zeros of the function $\widehat{\varphi}(x, \lambda_n)$ in $(0, \pi)$. More details can be found in \cite[Lemma 4.3]{G17} when $f$ and $F$ are rational Herglotz--Nevanlinna functions, and the case of singular endpoints is even easier as the eigenfunctions vanish at singular endpoints.

Consider again the problems~(\ref{eq:P_k}). The $n$-th eigenfunction of $\mathscr{P}(q, f, F)$ corresponds to the $(n+K)$-th eigenfunction of $\mathscr{P}(q^{(K)}, f^{(K)}, F^{(K)})$ and the latter one has $n + K - \boldsymbol{\Pi}_{f^{(K)}}(\lambda_n) - \boldsymbol{\Pi}_{F^{(K)}}(\lambda_n)$ zeros in $(0, \pi)$.
Therefore, applying the result of the preceding paragraph successively to the problems~(\ref{eq:P_k}), we obtain the following result.

\begin{theorem} \label{thm:oscillation}
  An eigenfunction of the problem $\mathscr{P}(q, f, F)$ corresponding to the eigenvalue $\lambda_n$ has exactly $n - \boldsymbol{\Pi}_f(\lambda_n) - \boldsymbol{\Pi}_F(\lambda_n)$ zeros in $(0, \pi)$.
\end{theorem}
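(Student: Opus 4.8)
The plan is to pull the known oscillation theorem for problems without singular endpoints, \cite[Theorem~4.4]{G17}, back along the chain of problems $\mathscr{P}(q^{(k)}, f^{(k)}, F^{(k)})$, $k = 0, 1, \dots, K$, from~(\ref{eq:P_k}). Recall that $(q^{(0)}, f^{(0)}, F^{(0)}) = (q, f, F)$, that $\mathscr{P}(q^{(K)}, f^{(K)}, F^{(K)})$ has no singular endpoints, and that by Theorem~\ref{thm:inverse} consecutive members of the chain satisfy $(q^{(k-1)}, f^{(k-1)}, F^{(k-1)}) = \widehat{\mathbf{T}}(q^{(k)}, f^{(k)}, F^{(k)})$. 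Fix $n \ge 0$. By Theorems~\ref{thm:transformation} and~\ref{thm:inverse_transformation}, for each $k$ the number $\lambda_n$ is the $(n+k)$-th eigenvalue of $\mathscr{P}(q^{(k)}, f^{(k)}, F^{(k)})$; in particular it is an eigenvalue of every problem in the chain, and we may write $Z_k$ for the number of zeros in $(0, \pi)$ of the corresponding eigenfunction, a well-defined nonnegative integer since the eigenvalues are simple.

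The key input, established in the paragraph preceding the statement, is the single-step zero-count relation for $\widehat{\mathbf{T}}$. Applying it to the pair $\bigl( \mathscr{P}(q^{(k)}, f^{(k)}, F^{(k)}), \mathscr{P}(q^{(k-1)}, f^{(k-1)}, F^{(k-1)}) \bigr)$, with the former regarded as the original problem (legitimate because $\lambda_n$ is its $(n+k)$-th eigenvalue and $n + k \ge 1$ for $k \ge 1$), gives for each $k = 1, \dots, K$
\begin{equation*}
  Z_k - Z_{k-1} = 1 + \boldsymbol{\Pi}_{f^{(k-1)}}(\lambda_n) - \boldsymbol{\Pi}_{f^{(k)}}(\lambda_n) + \boldsymbol{\Pi}_{F^{(k-1)}}(\lambda_n) - \boldsymbol{\Pi}_{F^{(k)}}(\lambda_n).
\end{equation*}
Summing over $k = 1, \dots, K$ makes the $\boldsymbol{\Pi}$-terms telescope, and since $f^{(0)} = f$ and $F^{(0)} = F$ we obtain
\begin{equation*}
  Z_K - Z_0 = K + \boldsymbol{\Pi}_f(\lambda_n) - \boldsymbol{\Pi}_{f^{(K)}}(\lambda_n) + \boldsymbol{\Pi}_F(\lambda_n) - \boldsymbol{\Pi}_{F^{(K)}}(\lambda_n).
\end{equation*}
On the other hand, $\mathscr{P}(q^{(K)}, f^{(K)}, F^{(K)})$ has no singular endpoints and $\lambda_n$ is its $(n+K)$-th eigenvalue, so \cite[Theorem~4.4]{G17} yields $Z_K = (n + K) - \boldsymbol{\Pi}_{f^{(K)}}(\lambda_n) - \boldsymbol{\Pi}_{F^{(K)}}(\lambda_n)$. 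Substituting this into the previous display and solving for $Z_0$, the terms carrying $K$, $\boldsymbol{\Pi}_{f^{(K)}}$ and $\boldsymbol{\Pi}_{F^{(K)}}$ all cancel, leaving $Z_0 = n - \boldsymbol{\Pi}_f(\lambda_n) - \boldsymbol{\Pi}_F(\lambda_n)$, which is the assertion.

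The telescoping itself is routine bookkeeping; the real content lies in the single-step relation, whose proof — already carried out before the statement, following \cite[Lemma~4.3]{G17} — is a Sturm-type comparison. From the identities $\bigl( \widehat{\varphi}(x, \lambda_n) \varphi(x, \lambda_0) \bigr)' = \varphi(x, \lambda_n) \varphi(x, \lambda_0)$ and $\bigl( \varphi(x, \lambda_n) / \varphi(x, \lambda_0) \bigr)' = (\lambda_0 - \lambda_n) \widehat{\varphi}(x, \lambda_n) / \varphi(x, \lambda_0)$, together with the strict positivity of $\varphi(\cdot, \lambda_0)$ on $(0, \pi)$ from Lemma~\ref{lem:no_zero}, the zeros of $\varphi(\cdot, \lambda_n)$ are seen to interlace with those of $\widehat{\varphi}(\cdot, \lambda_n)$ in the interior; a careful analysis of the behaviour of these eigenfunctions at each endpoint — tracking the poles of the Herglotz--Nevanlinna functions at a nonsingular endpoint, and using that the eigenfunctions vanish at a singular one — then converts the interlacing into the stated identity. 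I expect this endpoint analysis to be the only delicate point, everything downstream being the telescoping computation above.
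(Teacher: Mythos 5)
Your proposal is correct and follows essentially the same route as the paper: the key input is the single-step zero-count relation for $\widehat{\mathbf{T}}$ established just before the theorem, which you telescope along the chain~(\ref{eq:P_k}) starting from the nonsingular problem $\mathscr{P}(q^{(K)}, f^{(K)}, F^{(K)})$, where \cite[Theorem~4.4]{G17} applies. The paper leaves this bookkeeping implicit (``applying the result \ldots successively''), whereas you spell it out; the index shifts and cancellations all check out.
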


\subsection{Regularized trace formulas} \label{ss:trace}

As can be seen from (\ref{eq:Riccati}), (\ref{eq:q_f_F_hat}), and (\ref{eq:q_f_F_tilde}), our transformations $\widehat{\mathbf{T}}$ and $\widetilde{\mathbf{T}}$ preserve the smoothness of the potential. This implies, in particular, that the eigenvalues of a problem $\mathscr{P}(q, f, F)$ with $q \in \mathscr{W}_2^1[0, \pi]$ have the asymptotics
\begin{equation*}
  \sqrt{\lambda_n} = n - a + \frac{b}{n - a} + \ell_2 \left( \frac{1}{n^2} \right),
\end{equation*}
where
\begin{equation*}
  a := \frac{\ind f + \ind F}{2}, \qquad b := \frac{1}{\pi} \left( \frac{1}{2} \int_0^\pi q(x) \,\du x + \omega_1 + \Omega_1 \right).
\end{equation*}
Therefore, \emph{the first regularized trace} of the problem $\mathscr{P}(q, f, F)$, which we defined in \cite{G17} as
\begin{equation*}
  \Trace(q, f, F) := \sum_{n < a} \lambda_n + \sum_{n = a} (\lambda_n - b) + \sum_{n > a} \left( \lambda_n - (n - a)^2 - 2b \right),
\end{equation*}
converges. In the case of a boundary value problem $\mathscr{P}(q, f, F)$ with rational Herglotz--Nevanlinna functions $f$ and $F$, the sum of this series can be calculated by the formula~\cite[Theorem 4.6]{G17}
\begin{equation*}
  \Trace(q, f, F) = \frac{(-1)^{\ind f} q(0)}{4} + \frac{(-1)^{\ind F} q(\pi)}{4} - \frac{\omega_1^2}{2} - \frac{\Omega_1^2}{2} - \omega_2 - \Omega_2.
\end{equation*}
In order to also include inverse square singularities, the expression on the right-hand side needs to be generalized slightly.

\begin{theorem} \label{thm:trace}
The following identity holds:
\begin{equation*}
\begin{aligned}
  \Trace(q, f, F) = & (-1)^{\ell_f + \ind f} \frac{2 \ell_f + 1}{4} \left( q(0) + \frac{\ell_F (\ell_F + 1)}{\pi^2} \right) \\
  & + (-1)^{\ell_F + \ind F} \frac{2 \ell_F + 1}{4} \left( q(\pi) + \frac{\ell_f (\ell_f + 1)}{\pi^2} \right) \\
  & - \frac{\omega_1^2}{2} - \frac{\Omega_1^2}{2} - \omega_2 - \Omega_2 - \frac{(a^2 + a + 6b) (2 a + 1)}{6} \mathbf{1}_{(-\infty,-1]}(a).
\end{aligned}
\end{equation*}
\end{theorem}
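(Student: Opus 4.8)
The strategy is to use the chain of transformations~(\ref{eq:P_k}) exactly as in Subsections~\ref{ss:asymptotics} and~\ref{ss:oscillation}: reduce the general problem to one without singular endpoints, where the trace formula of \cite[Theorem~4.6]{G17} already applies, and track precisely how each quantity on the right-hand side changes under a single application of $\widehat{\mathbf{T}}$. Since the series defining $\Trace(q, f, F)$ is literally a rearrangement of shifted eigenvalues, Theorem~\ref{thm:transformation} (which says $\widehat\lambda_n = \lambda_{n+1}$) immediately relates $\Trace(\widehat q, \widehat f, \widehat F)$ to $\Trace(q, f, F)$ together with the lost data $\lambda_0$ and the change in the asymptotic constants $a$ and $b$. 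Concretely, the first step is to compute the ``defect''
\begin{equation*}
  \Delta := \Trace(q, f, F) - \Trace(\widehat q, \widehat f, \widehat F)
\end{equation*}
purely in terms of $\lambda_0$, $a$, $b$, $\widehat a = a - 1$, and $\widehat b$. Using $\widehat\lambda_n = \lambda_{n+1}$ and carefully splitting the three sums according to the position of $a$ (and $\widehat a = a-1$) relative to the summation indices, one gets $\Delta = \lambda_0 + (\text{explicit function of } a, b)$ on the ``singular'' side $a \le -1$ and a simpler expression when $a \ge 0$; the step function $\mathbf{1}_{(-\infty,-1]}(a)$ in the statement is exactly the bookkeeping of whether the index $n=0$ term sits in the ``$n<a$'', ``$n=a$'', or ``$n>a$'' bucket before versus after the shift.

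The second step is to evaluate $\lambda_0$ itself in closed form. Here the key inputs are the fine asymptotics~(\ref{eq:fine_asymptotics}) of $\varphi'/\varphi$ near the left endpoint together with its right-endpoint analogue, the Riccati identity~(\ref{eq:Riccati}) at $\lambda = \lambda_0$, and the formula for $\widehat q$ from~(\ref{eq:q_f_F_hat}). Integrating~(\ref{eq:Riccati}) against a suitable weight, or equivalently combining the two boundary expansions of $\varphi'(x,\lambda_0)/\varphi(x,\lambda_0)$ with the boundary conditions~(\ref{eq:boundary}) encoded through $f_\uparrow, f_\downarrow$, expresses $\lambda_0$ in terms of $\int_0^\pi q$, the boundary data of $f$ and $F$, and $q(0)$, $q(\pi)$. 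This is essentially a weighted trace identity for the single eigenvalue $\lambda_0$, and it is the computation that produces the terms $(-1)^{\ell_f + \ind f}\frac{2\ell_f+1}{4}\bigl(q(0) + \ell_F(\ell_F+1)/\pi^2\bigr)$ and its mirror: the factor $2\ell_f + 1$ comes from the denominator $2\ell_f + 3 = 2(\ell_f+1)+1$ appearing in~(\ref{eq:fine_asymptotics}) shifting to $2\ell_{\widehat f}+1$ under the transformation, and the cross term $\ell_F(\ell_F+1)/\pi^2$ is precisely the ``$q(0)$-correction'' visible in~(\ref{eq:fine_asymptotics}).

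The third step is purely formal: one checks that the claimed right-hand side $R(q, f, F)$ satisfies the same one-step recursion, i.e. $R(q, f, F) - R(\widehat q, \widehat f, \widehat F) = \Delta$. This requires the explicit transformation laws for $\omega_1, \omega_2$ (the identity $\widehat\omega_1 - \omega_1$ was recorded in the proof of Lemma~\ref{lem:q_omega_Omega}, and an analogous but longer computation gives $\widehat\omega_2 - \omega_2$ — this is the one genuinely new algebraic identity needed, obtainable from Lemma~\ref{lem:f_hat}), the fact that $\ell_{\widehat f} = \ell_f - 1$ when $\ell_f \ge 1$ while $\ell_{\widehat f} = \ell_f = -1$ otherwise, the change $\ind f \mapsto \ind f - 1$ hence the sign flips $(-1)^{\ind f} \mapsto -(-1)^{\ind f}$, and the change in $\frac{1}{2}\int q$ computed in Lemma~\ref{lem:q_omega_Omega}. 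Finally, by Lemma~\ref{lem:L_2} and Theorem~\ref{thm:transformation} one may iterate $K = \max\{\ell_f, \ell_F\}$ times to reach a problem with no singular endpoints, where $\ell_f = \ell_F = -1$, the two ``$q(0)$'' lines of the claimed formula collapse to $\frac{(-1)^{\ind f}q(0)}{4} + \frac{(-1)^{\ind F}q(\pi)}{4}$, the indicator $\mathbf{1}_{(-\infty,-1]}(a)$ term matches the \cite{G17} formula after unwinding the $a$-shifts along the chain, and the identity follows from \cite[Theorem~4.6]{G17} together with the recursion established in steps one through three.

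I expect the main obstacle to be the second step: getting the constant term $\lambda_0$ exactly right, with the correct coefficients on $q(0)$ and $q(\pi)$ and the correct cross terms, since this requires handling the next-order term in~(\ref{eq:fine_asymptotics}) at both endpoints simultaneously and interpreting the boundary values at singular endpoints as limits — a bookkeeping task where signs and the parity factors $(-1)^{\ell_f + \ind f}$ are easy to misplace. The case split on $a \ge 0$ versus $a \le -1$ (equivalently, on whether $\ind f + \ind F$ is nonnegative), which governs the indicator function, is a secondary source of care but is routine once the recursion framework is set up.
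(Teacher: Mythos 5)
Your plan coincides with the paper's proof: the paper reduces to the nonsingular case via the chain~(\ref{eq:P_k}) and proves precisely your steps one and three as Lemma~\ref{lem:trace}, namely the one-step recursion $\Trace(\widehat q,\widehat f,\widehat F)=\Trace(q,f,F)-\lambda_0+(a^2+2b)\mathbf{1}_{(-\infty,0)}(a)+b\,\mathbf{1}_{\{0\}}(a)$ together with the matching recursion for the boundary/$\omega$ combination, obtained from (\ref{eq:Riccati}), (\ref{eq:q_f_F_hat}), (\ref{eq:fine_asymptotics}) and the transformation laws for $\omega_1,\omega_2$, after which everything telescopes against \cite[Theorem~4.6]{G17}. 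One correction to your step two: $\lambda_0$ is never, and cannot be, ``evaluated in closed form'' in terms of $\int_0^\pi q$, $q(0)$, $q(\pi)$ and the boundary data of $f$ and $F$; what the fine asymptotics actually produce is the relation $(2\ell_f+3)\bigl(\widehat q(0)+\ell_{\widehat F}(\ell_{\widehat F}+1)/\pi^2\bigr)=(2\ell_f+1)\bigl(q(0)+\ell_F(\ell_F+1)/\pi^2\bigr)+2f^2(\lambda_0)+2\lambda_0$, so $\lambda_0$ enters only as the defect relating the transformed boundary data to the original and then cancels against the $-\lambda_0$ in the trace recursion of your step one --- your step three already supplies everything step two was meant to, so no closed-form evaluation is needed or possible.
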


For rational Herglotz--Nevanlinna functions $f$ and $F$ we have $\ell_f = \ell_F = -1$ and $a \ge 0$, and thus the two expressions coincide. The proof of the theorem follows by reducing the problem $\mathscr{P}(q, f, F)$ to the one with $\min \{ \ind f, \ind F \} \ge -1$, as in the preceding two subsections, and using Lemma \ref{lem:q_omega_Omega} and the following lemma.

\begin{lemma} \label{lem:trace}
Let $(\widehat{q}, \widehat{f}, \widehat{F}) := \widehat{\mathbf{T}} (q, f, F)$, and let $\widehat{\omega}_1$ and $\widehat{\omega}_2$ be defined by~(\ref{eq:omega_Omega}) with $f$ and $F$ replaced by $\widehat{f}$ and $\widehat{F}$ respectively. We have
\begin{multline*}
  (-1)^{\ell_{\widehat{f}} + \ind \widehat{f}} (2 \ell_{\widehat{f}} + 1)\left( \widehat{q}(0) + \frac{\ell_{\widehat{F}} (\ell_{\widehat{F}} + 1)}{\pi^2} \right) - 2 \widehat{\omega}_1^2 - 4 \widehat{\omega}_2 \\
  = (-1)^{\ell_f + \ind f} (2 \ell_f + 1) \left( q(0) + \frac{\ell_F (\ell_F + 1)}{\pi^2} \right) - 2 \omega_1^2 - 4 \omega_2 - 2 \lambda_0
\end{multline*}
and similarly for the right endpoint. The regularized traces of these two problems are related by the formula
\begin{equation*}
  \Trace(\widehat{q}, \widehat{f}, \widehat{F}) = \Trace(q, f, F) - \lambda_0 + \left( a^2 + 2 b \right) \mathbf{1}_{(-\infty,0)}(a) + b \mathbf{1}_{\{0\}}(a).
\end{equation*}
\end{lemma}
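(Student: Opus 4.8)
The plan is to prove the two assertions of Lemma~\ref{lem:trace} in order: first the ``endpoint identity'' relating the quantities $(-1)^{\ell_f+\ind f}(2\ell_f+1)(q(0)+\ell_F(\ell_F+1)/\pi^2)-2\omega_1^2-4\omega_2$ for the original and transformed problems, and then use it together with the asymptotic relations $\widehat\lambda_n=\lambda_{n+1}$, $\widehat\gamma_n=\gamma_{n+1}/(\lambda_{n+1}-\lambda_0)$ from Theorem~\ref{thm:transformation} to deduce the trace relation. For the endpoint identity, the two cases $\ind f\ge 0$ and $\ind f\le -1$ must be handled separately, since $\ell_f$, $\ell_{\widehat f}$, the formula for $\widehat q$, and the definitions of $\omega_1,\omega_2$ all branch there. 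In the case $\ind f\le -1$ (a singular or Dirichlet left endpoint), one has $\ell_f\ge 0$, $\ell_{\widehat f}=\ell_f-1$, and the refined asymptotics~(\ref{eq:fine_asymptotics}) for $\varphi'/\varphi$ near $0$ is available because $q\in\mathscr W_2^1$. The key computational step is to insert the definition $\widehat q(x)=q(x)-2(\varphi'(x,\lambda_0)/\varphi(x,\lambda_0))'-2(\ell_f+1)/x^2-2(\ell_F+1)/(\pi-x)^2$ into $\widehat q(0)$, expand using~(\ref{eq:fine_asymptotics}), and collect the $x^{-2}$, $x^0$, and $o(1)$ terms: the singular parts must cancel (guaranteeing $\widehat q$ is regular, consistent with Lemma~\ref{lem:L_2}) and the constant term gives $\widehat q(0)$ in terms of $q(0)$, $\ell_F(\ell_F+1)/\pi^2$, and $\lambda_0$. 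One then plugs in the explicit values of $\omega_1,\omega_2$ from the third branch of~(\ref{eq:omega_Omega}), namely $\omega_1=-\ell_f(\ell_f+1)/(2\pi)$ and $\omega_2=-\ell_f^2(\ell_f+1)^2/(8\pi^2)$, and their analogues with $\ell_f$ replaced by $\ell_f-1=\ell_{\widehat f}$, and verifies the claimed algebraic identity. In the case $\ind f\ge 0$ (nonsingular, eigenparameter-dependent or Robin left endpoint) one has $\ell_f=\ell_{\widehat f}=-1$, so the prefactors $(2\ell_f+1)=(2\ell_{\widehat f}+1)=-1$ and the $\ell_F(\ell_F+1)/\pi^2$ terms drop unless the right endpoint is singular; here one uses instead the formula $\widehat\omega_1-\omega_1=f(\lambda_0)$ from the proof of Lemma~\ref{lem:q_omega_Omega}, a companion relation $\widehat\omega_2-\omega_2$ computable exactly as in \cite[Lemma 4.1 and Theorem 4.6]{G17}, and the relation between $\widehat q(0)$ and $q(0)$, $f(\lambda_0)$ obtained by evaluating the regular version of the definition of $\widehat q$ at $x=0$ (using $\varphi'(0,\lambda_0)/\varphi(0,\lambda_0)=-f(\lambda_0)$).

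Granting the endpoint identity, the trace relation follows as in \cite[Theorem 4.6]{G17}: write out
\begin{equation*}
  \Trace(\widehat q,\widehat f,\widehat F)=\sum_{n<\widehat a}\widehat\lambda_n+\sum_{n=\widehat a}(\widehat\lambda_n-\widehat b)+\sum_{n>\widehat a}\bigl(\widehat\lambda_n-(n-\widehat a)^2-2\widehat b\bigr),
\end{equation*}
substitute $\widehat a=a-1$, $\widehat\lambda_n=\lambda_{n+1}$, and $\widehat b=b$ (the invariance $\widehat b=b$ being precisely Lemma~\ref{lem:q_omega_Omega} divided by $\pi$), reindex $n\mapsto n-1$, and compare term by term with the defining series for $\Trace(q,f,F)$. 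The difference telescopes into a finite sum: one loses the term $\lambda_0$ (which is no longer an eigenvalue of the transformed problem), and the shift $a\mapsto a-1$ introduces a correction from the $(n-a)^2+2b$ counterterms. A careful bookkeeping of which of the three sums each index falls into—governed by whether $a$ is a nonnegative integer, a negative integer, or not an integer—produces the indicator-function terms $(a^2+2b)\mathbf 1_{(-\infty,0)}(a)+b\mathbf 1_{\{0\}}(a)$. This part is routine but requires attention to the edge cases at $a=0$ and at negative integer $a$.

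Finally, the theorem itself (Theorem~\ref{thm:trace}) is obtained by induction along the chain~(\ref{eq:P_k}): iterate Lemma~\ref{lem:trace} and Lemma~\ref{lem:q_omega_Omega} from $(q^{(K)},f^{(K)},F^{(K)})$—which has no singular endpoints, so \cite[Theorem 4.6]{G17} applies—back to $(q,f,F)$. The endpoint identity of Lemma~\ref{lem:trace} shows that the combination $(-1)^{\ell_f+\ind f}\frac{2\ell_f+1}{4}(q(0)+\ell_F(\ell_F+1)/\pi^2)-\omega_1^2/2-\omega_2$ changes by exactly $-\lambda_0/2$ at each step, and summing the telescoped trace corrections along the chain assembles the indicator term $-\frac{(a^2+a+6b)(2a+1)}{6}\mathbf 1_{(-\infty,-1]}(a)$ in the statement (the sum $\sum_{j}j^2$-type contributions over the chain producing the cubic-in-$a$ polynomial). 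I expect the main obstacle to be the case $\ind f\le -1$ of the endpoint identity: getting the constant term in the expansion of $\widehat q(0)$ exactly right from~(\ref{eq:fine_asymptotics}), and then matching it against the prefactor changes $(-1)^{\ell_f+\ind f}(2\ell_f+1)\mapsto (-1)^{\ell_{\widehat f}+\ind\widehat f}(2\ell_{\widehat f}+1)$ together with the quadratic-in-$\ell_f$ terms from $\omega_1^2$ and $\omega_2$—all the pieces are elementary, but the sign $(-1)^{\ell_f+\ind f}$ and the $o(x)$ remainder must be tracked with care.
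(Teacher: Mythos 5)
Your overall strategy is the paper's: the first identity is extracted from the definition of $\widehat{q}$ together with the refined asymptotics~(\ref{eq:fine_asymptotics}) and bookkeeping of $\omega_1,\omega_2$, and the trace relation follows by reindexing the defining series with $\widehat{a}=a-1$, $\widehat{b}=b$, $\widehat{\lambda}_n=\lambda_{n+1}$ (your case analysis at $a>0$, $a=0$, $a<0$ is exactly what the paper dismisses as ``straightforward,'' and it is correct). However, there are two concrete problems in the part you yourself identify as the main obstacle. First, the direction of the $\ell$-shift is reversed: when $\ind f\le -1$ one has $\ell_{\widehat f}=\ell_f+1$, not $\ell_f-1$. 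The transformation $\widehat{\mathbf{T}}$ lowers the index by one, and for $\ind f\le -1$ this means $\boldsymbol{\infty}_n\mapsto\boldsymbol{\infty}_{n+1}$, i.e.\ the singularity gets \emph{stronger}; this is also forced by the formula for $\widehat q$ in~(\ref{eq:q_f_F_hat}), since $-2(\varphi'/\varphi)'\sim 2(\ell_f+1)/x^2$ converts $\ell_f(\ell_f+1)/x^2$ into $(\ell_f+1)(\ell_f+2)/x^2=\ell_{\widehat f}(\ell_{\widehat f}+1)/x^2$. The paper's proof uses precisely the identities $(\ell_F+1)(\ell_F+2)=\ell_{\widehat F}(\ell_{\widehat F}+1)$ and $2\ell_f+3=(-1)^{\ell_{\widehat f}-\ell_f+\ind\widehat f-\ind f}(2\ell_{\widehat f}+1)$. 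If you substitute $\ell_{\widehat f}=\ell_f-1$ into $(2\ell_{\widehat f}+1)$ and into the formulas for $\widehat\omega_1,\widehat\omega_2$ from~(\ref{eq:omega_Omega}), the ``claimed algebraic identity'' you set out to verify will simply not balance.

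Second, your route to $\widehat q(0)$ --- ``insert the definition of $\widehat q$ and expand $(\varphi'/\varphi)'$ using~(\ref{eq:fine_asymptotics})'' --- requires differentiating an asymptotic expansion term by term: knowing $\varphi'/\varphi=(\ell_f+1)/x+cx+o(x)$ does not by itself give $(\varphi'/\varphi)'=-(\ell_f+1)/x^2+c+o(1)$. The paper avoids this by first substituting the Riccati identity~(\ref{eq:Riccati}) into~(\ref{eq:q_f_F_hat}), which trades $(\varphi'/\varphi)'$ for $q+\ell_f(\ell_f+1)/x^2+\cdots-\lambda_0-(\varphi'/\varphi)^2$; one then only needs to \emph{square}~(\ref{eq:fine_asymptotics}) to evaluate $\lim_{x\to 0}\bigl((\varphi'/\varphi)^2-(\ell_f+1)^2/x^2\bigr)$, which yields
\begin{equation*}
  \frac{\widehat q(0)+q(0)}{2}=\frac{2(\ell_f+1)}{2\ell_f+3}\left(q(0)+\frac{\ell_F(\ell_F+1)}{\pi^2}-\lambda_0\right)-\frac{(\ell_F+1)^2}{\pi^2}+\lambda_0 ,
\end{equation*}
and hence the key relation $(2\ell_f+3)\bigl(\widehat q(0)+(\ell_F+1)(\ell_F+2)/\pi^2\bigr)=(2\ell_f+1)\bigl(q(0)+\ell_F(\ell_F+1)/\pi^2\bigr)+2f^2(\lambda_0)+2\lambda_0$ (the $f^2(\lambda_0)$ term being absent when $\ind f\le-1$). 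Note also that the paper does not split into cases: this one computation, combined with the single identity $\widehat\omega_1^2+2\widehat\omega_2-\omega_1^2-2\omega_2=\lambda_0+(-1)^{\ell_f+\ind f}\bigl(f^2(\lambda_0)+\lambda_0\bigr)$, covers $\ind f\ge 0$ and $\ind f\le-1$ simultaneously; your separate treatment would also have to handle the boundary case $\ind f=0$, where $\ell_f=-1$ but $\ell_{\widehat f}=0$, contrary to your claim that $\ell_f=\ell_{\widehat f}=-1$ throughout the nonsingular case.
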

\begin{proof}
From (\ref{eq:Riccati}) and (\ref{eq:q_f_F_hat}) we obtain
\begin{equation*}
  \frac{\widehat{q}(0) + q(0)}{2} = \left. \left( \left( \frac{\varphi'(x, \lambda_0)}{\varphi(x, \lambda_0)} \right)^2 - \frac{(\ell_f + 1)^2}{x^2} \right) \right|_{x=0} - \frac{(\ell_F + 1)^2}{\pi^2} + \lambda_0,
\end{equation*}
which, using (\ref{eq:fine_asymptotics}), gives
\begin{equation*}
  (2 \ell_f + 3) \left( \widehat{q}(0) + \frac{(\ell_F + 1) (\ell_F + 2)}{\pi^2} \right) = (2 \ell_f + 1) \left( q(0) + \frac{\ell_F (\ell_F + 1)}{\pi^2} \right) + 2 f^2(\lambda_0) + 2 \lambda_0
\end{equation*}
with the convention that the next to the last term on the right-hand side is omitted whenever $\ind f \le -1$. This together with the identities
\begin{equation*}
  (2 \ell_f + 3) = (-1)^{\ell_{\widehat{f}} - \ell_f + \ind \widehat{f} - \ind f} (2 \ell_{\widehat{f}} + 1), \qquad (\ell_F + 1) (\ell_F + 2) = \ell_{\widehat{F}} (\ell_{\widehat{F}} + 1),
\end{equation*}
and
\begin{equation*}
  \widehat{\omega}_1^2 + 2 \widehat{\omega}_2 - \omega_1^2 - 2 \omega_2 = \lambda_0 + (-1)^{\ell_f + \ind f} \left( f^2(\lambda_0) + \lambda_0 \right)
\end{equation*}
proves the first identity of the lemma. The case of the right endpoint is proved similarly. The relation between $\Trace(\widehat{q}, \widehat{f}, \widehat{F})$ and $\Trace(q, f, F)$ is straightforward from the definition of the trace.
\end{proof}

\subsection{Inverse problem by eigenvalues and norming constants} \label{ss:byspectraldata}

We now turn to inverse eigenvalue problems for boundary value problems of the form $\mathscr{P}(q, f, F)$. The purpose of this subsection is to answer the following question: What conditions must two sequences of real numbers $\{ \lambda_n \}_{n \ge 0}$ and $\{ \gamma_n \}_{n \ge 0}$ satisfy in order to be the eigenvalues and the norming constants of a boundary value problem of the form $\mathscr{P}(q, f, F)$?

Obviously, the elements of the former sequence must be pairwise distinct and the elements of the latter one must all be positive:
\begin{equation} \label{eq:increasing}
  \lambda_0 < \lambda_1 < \lambda_2 < \ldots, \qquad \gamma_n > 0, \quad n \ge 0.
\end{equation}
Moreover, Theorem~\ref{thm:asymptotics} shows that these two sequences should also satisfy the asymptotics
\begin{equation} \label{eq:asymptotics}
\begin{aligned}
  \sqrt{\lambda_n} &= n - \frac{M + N}{2} + \frac{\sigma}{\pi n} + \ell_2 \left( \frac{1}{n} \right), \\
  \gamma_n &= \frac{\pi}{2} \left( n - \frac{M + N}{2} \right)^{2M} \left( 1 + \ell_2 \left( \frac{1}{n} \right) \right)
\end{aligned}
\end{equation}
for some real $\sigma$ and integers $M$, $N$. We are now going to prove that these necessary conditions are also sufficient, i.e., for any sequences of real numbers $\{ \lambda_n \}_{n \ge 0}$ and $\{ \gamma_n \}_{n \ge 0}$ satisfying these conditions, there exists a unique boundary value problem $\mathscr{P}(q, f, F)$ having these sequences as its eigenvalues and norming constants. The idea is to reduce the general case to the case $M$, $N \ge -1$ for which the result was already proved in~\cite[Theorem 4.7]{G17}.

\begin{theorem} \label{thm:by_spectral_data}
  Let $\{ \lambda_n \}_{n \ge 0}$ and $\{ \gamma_n \}_{n \ge 0}$ be sequences of real numbers satisfying the conditions (\ref{eq:increasing}) and (\ref{eq:asymptotics}). Then there exists a unique boundary value problem $\mathscr{P}(q, f, F)$ having the eigenvalues $\{ \lambda_n \}_{n \ge 0}$ and the norming constants $\{ \gamma_n \}_{n \ge 0}$.
\end{theorem}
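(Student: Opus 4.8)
The plan is to reduce to the nonsingular case $M, N \ge -1$, which is \cite[Theorem 4.7]{G17}, by using the transformation $\widehat{\mathbf{T}}$ together with Theorem~\ref{thm:transformation}. As in Subsections~\ref{ss:asymptotics}--\ref{ss:trace}, I set $K := -1 - \min\{M, N, -1\} \ge 0$ (i.e. $K = \max\{\ell_{\boldsymbol{\infty}_{-M-1}}, \ell_{\boldsymbol{\infty}_{-N-1}}, 0\}$, the number of times one must apply the index-lowering operation to reach the nonsingular region). If $K = 0$ then $M, N \ge -1$ and we are done by \cite{G17}, so assume $K \ge 1$. Pick arbitrary numbers $\lambda_{-K} < \cdots < \lambda_{-1} < \lambda_0$ and arbitrary positive numbers $\gamma_{-K}, \ldots, \gamma_{-1}$, and define new sequences by prepending these and then applying the rule dictated by Theorem~\ref{thm:transformation} in reverse: set $\lambda_n^{(0)} := \lambda_n$, $\gamma_n^{(0)} := \gamma_n$ for $n \ge -K$, and inductively
\begin{equation*}
  \lambda_n^{(k)} := \lambda_{n-1}^{(k-1)}, \qquad \gamma_n^{(k)} := \gamma_{n-1}^{(k-1)} \bigl( \lambda_{n-1}^{(k-1)} - \lambda_{-k} \bigr), \qquad n \ge -K + k,
\end{equation*}
for $k = 1, \ldots, K$. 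Equivalently, $\mathscr{P}(q^{(k)}, f^{(k)}, F^{(k)})$ is obtained from $\mathscr{P}(q^{(k-1)}, f^{(k-1)}, F^{(k-1)})$ by the transformation $\widetilde{\mathbf{T}}$ with parameters $(\lambda_{-k}, \gamma_{-k})$, which lowers indices by one at each endpoint, so after $K$ steps the indices of $f^{(K)}$ and $F^{(K)}$ at both endpoints are $\ge -1$.

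The first task is to check that the terminal sequences $\{\lambda_n^{(K)}\}_{n \ge -1}$ (reindexed to start at $0$) and $\{\gamma_n^{(K)}\}$ satisfy the hypotheses (\ref{eq:increasing}) and (\ref{eq:asymptotics}) of \cite[Theorem 4.7]{G17} with $M^{(K)} = M + K$, $N^{(K)} = N + K$ (both $\ge -1$) and the \emph{same} $\sigma$. Monotonicity and positivity are immediate from the construction (we inserted the $\lambda_{-k}$ below the existing eigenvalues and multiplied by positive factors $\lambda_{n-1}^{(k-1)} - \lambda_{-k} > 0$). For the asymptotics, note that prepending finitely many terms and shifting the index by $K$ turns $n - (M+N)/2$ into $n - (M^{(K)} + N^{(K)})/2$, so the leading term of $\sqrt{\lambda_n^{(K)}}$ is correct; the $\sigma/(\pi n)$ term is unchanged up to $\ell_2(1/n)$ error because the index shift perturbs $1/n$ by $O(1/n^2)$. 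For the norming constants, $\gamma_n^{(K)} = \gamma_{n-K} \prod_{k=1}^{K}(\lambda_{n-k} - \lambda_{-k})$, and since $\lambda_{n-k} - \lambda_{-k} = (n - (M+N)/2)^2 (1 + \ell_2(1/n))$ by (\ref{eq:asymptotics}) applied to the original sequence, the product contributes a factor $(n - (M+N)/2)^{2K}(1 + \ell_2(1/n))$, which together with $\gamma_{n-K} = \tfrac{\pi}{2}(n - (M+N)/2)^{2M}(1+\ell_2(1/n))$ yields exactly $\tfrac{\pi}{2}(n - (M^{(K)}+N^{(K)})/2)^{2M^{(K)}}(1+\ell_2(1/n))$, as required.

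With the hypotheses verified, \cite[Theorem 4.7]{G17} produces a unique problem $\mathscr{P}(q^{(K)}, f^{(K)}, F^{(K)})$ with no singular endpoints whose eigenvalues and norming constants are $\{\lambda_n^{(K)}\}$ and $\{\gamma_n^{(K)}\}$. Applying $\widehat{\mathbf{T}}$ to it $K$ times and invoking Theorem~\ref{thm:transformation} at each step (note $\lambda_0^{(k)} = \lambda_{-k} < \mathring{\boldsymbol{\uplambda}}$ of the current problem, so each application is admissible and strips off the bottom eigenvalue), we obtain a problem $\mathscr{P}(q, f, F) := \widehat{\mathbf{T}}^K(q^{(K)}, f^{(K)}, F^{(K)})$ with $\ell_f = -1 - \min\{M, -1\}$, $\ell_F = -1 - \min\{N, -1\}$ and precisely the eigenvalues $\{\lambda_n\}_{n\ge0}$ and norming constants $\{\gamma_n\}_{n\ge0}$, establishing existence. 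For uniqueness, suppose $\mathscr{P}(\tilde q, \tilde f, \tilde F)$ is another such problem; since its indices are determined by the integers $M, N$ read off from the asymptotics (\ref{eq:asymptotics}), applying $\widetilde{\mathbf{T}}$ with the same parameters $(\lambda_{-1}, \gamma_{-1}), \ldots, (\lambda_{-K}, \gamma_{-K})$ produces, by Theorem~\ref{thm:inverse_transformation} and Theorem~\ref{thm:inverse}, a nonsingular problem with the same eigenvalues and norming constants $\{\lambda_n^{(K)}\}, \{\gamma_n^{(K)}\}$; by the uniqueness half of \cite[Theorem 4.7]{G17} this coincides with $\mathscr{P}(q^{(K)}, f^{(K)}, F^{(K)})$, and then applying $\widehat{\mathbf{T}}^K$ and using that $\widehat{\mathbf{T}}$ and $\widetilde{\mathbf{T}}$ are mutual inverses (Theorem~\ref{thm:inverse}) gives $(\tilde q, \tilde f, \tilde F) = (q, f, F)$.

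The main obstacle I anticipate is not the logical skeleton, which is essentially forced by the transformation machinery, but the bookkeeping in the asymptotic computation: one must be careful that the auxiliary data $\lambda_{-K}, \ldots, \lambda_{-1}$ (which appear inside the finite product defining $\gamma_n^{(K)}$) really do enter only through the fixed leading asymptotic factor $(n - (M+N)/2)^{2}$ and contribute nothing to the $\ell_2(1/n)$ error, and likewise that no $\sigma$-type term is generated by the reindexing. This is the only place where the argument is more than formal, and it is exactly the computation carried out (in the $\widehat{\mathbf{T}}$ direction) in the proof of Theorem~\ref{thm:asymptotics}; here it must be rerun with the roles of the known and unknown sequences interchanged.
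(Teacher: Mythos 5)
Your proposal is correct and follows essentially the same route as the paper: prepend $K = -1-\min\{M,N\}$ auxiliary eigenvalues and norming constants, build the shifted sequences dictated by Theorem~\ref{thm:inverse_transformation}, invoke the nonsingular case from \cite[Theorem 4.7]{G17}, and return via $\widehat{\mathbf{T}}^K$ using Theorems~\ref{thm:transformation} and~\ref{thm:inverse} for existence and uniqueness. Only two cosmetic slips: $\widetilde{\mathbf{T}}$ \emph{raises} (not lowers) the indices, and in your closed form for $\gamma_n^{(K)}$ the factors should read $\lambda_{n-K}-\lambda_{-k}$ rather than $\lambda_{n-k}-\lambda_{-k}$, consistent with your own recursion.
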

\begin{proof}
We only need to consider the case $K := -1 - \min \{ M, N \} \ge 1$. Let $\lambda_{-K} < \ldots < \lambda_{-1}$ be arbitrary numbers strictly less than $\lambda_0$ and let $\gamma_{-K}$, $\ldots$, $\gamma_{-1}$ be arbitrary positive numbers. With Theorem~\ref{thm:inverse_transformation} in mind, we consider the sequences $\{ \widetilde{\lambda}_n \}_{n \ge 0}$ and $\{ \widetilde{\gamma}_n \}_{n \ge 0}$ defined by
\begin{equation*}
  \widetilde{\lambda}_n := \lambda_{n-K}, \qquad \widetilde{\gamma}_n := \gamma_{n-K} \prod_{m=0}^{\min\{ n, K \} - 1} (\lambda_{n-K} - \lambda_{m-K}).
\end{equation*}
They satisfy the conditions (\ref{eq:increasing}) and (\ref{eq:asymptotics}) with $M$ and $N$ replaced by $M + K \ge -1$ and $N + K \ge -1$ respectively. Therefore there exists a boundary value problem $\mathscr{P}(\widetilde{q}, \widetilde{f}, \widetilde{F})$ having the eigenvalues $\{ \widetilde{\lambda}_n \}_{n \ge 0}$ and the norming constants $\{ \widetilde{\gamma}_n \}_{n \ge 0}$. Now it only remains to observe from Theorem~\ref{thm:transformation} that the original sequences $\{ \lambda_n \}_{n \ge 0}$ and $\{ \gamma_n \}_{n \ge 0}$ are the eigenvalues and the norming constants of the boundary value problem $\mathscr{P}(q, f, F)$ with $(q, f, F) := \widehat{\mathbf{T}}^K (\widetilde{q}, \widetilde{f}, \widetilde{F})$. The uniqueness follows similarly from Theorems~\ref{thm:transformation} and \ref{thm:inverse}.
\end{proof}

\subsection{Inverse problems by one spectrum} \label{ss:byonespectrum}

It is well known for a boundary value problem $\mathscr{P}(q, f, F)$ with constant $f$ and $F$ that the knowledge of the norming constants can be replaced by some other information, like a symmetry assumption on this boundary value problem, or the knowledge of the potential on half the interval and the corresponding boundary constant. We generalized these results to problems $\mathscr{P}(q, f, F)$ with rational Herglotz--Nevanlinna functions $f$ and $F$ in~\cite{G17}. These results can also be extended to the case of arbitrary $f$, $F \in \mathfrak{B}$.

We start with \emph{symmetric} boundary value problems, i.e., those of the form $\mathscr{P}(q, f, f)$ with $q(x) = q(\pi - x)$. Obviously, the eigenvalues of a symmetric problem satisfy asymptotics of the form~(\ref{eq:asymptotics}) with $M = N$. Conversely, if $\{ \lambda_n \}_{n \ge 0}$ is a sequence of real numbers satisfying the asymptotics
\begin{equation} \label{eq:symmetric_asymptotics}
  \sqrt{\lambda_n} = n - M + \frac{\sigma}{\pi n} + \ell_2 \left( \frac{1}{n} \right),
\end{equation}
for some real $\sigma$ and integer $M$, then there exists a unique symmetric problem $\mathscr{P}(\widetilde{q}, \widetilde{f}, \widetilde{f})$ having the eigenvalues $\{ \widetilde{\lambda}_n \}_{n \ge 0}$ constructed as in the preceding subsection. The problem $\mathscr{P}(q, f, F)$ with $(q, f, F) := \widehat{\mathbf{T}}^K (\widetilde{q}, \widetilde{f}, \widetilde{f})$ is also symmetric and its uniqueness can be established in a similar way. Hence we have the following result. Alternatively, one could first prove that a problem is symmetric if and only if $\beta_n = (-1)^n$ for all $n \ge 0$ and then use (\ref{eq:chi_beta_gamma}), as in \cite[Subsection 4.5]{G17}.

\begin{theorem} \label{thm:symmetric}
  Let $\{ \lambda_n \}_{n \ge 0}$ be a strictly increasing sequence of real numbers satisfying the asymptotics (\ref{eq:symmetric_asymptotics}) for some real $\sigma$ and integer $M$. Then there exists a unique symmetric boundary value problem $\mathscr{P}(q, f, f)$ having the spectrum $\{ \lambda_n \}_{n \ge 0}$.
\end{theorem}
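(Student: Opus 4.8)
The plan is to mirror the strategy used in Subsection~\ref{ss:byspectraldata} for the general inverse problem, but now tracking the symmetry of the problem throughout. First I would reduce to the genuinely singular case $K := -1 - \min\{M,N\} = -1 - M \ge 1$, since the case $M = N \ge -1$ is precisely \cite[Subsection 4.5]{G17}. Given $\{\lambda_n\}_{n\ge0}$ satisfying~(\ref{eq:symmetric_asymptotics}), I would pick auxiliary numbers $\lambda_{-K} < \ldots < \lambda_{-1} < \lambda_0$ and auxiliary positive norming constants $\gamma_{-K}, \ldots, \gamma_{-1}$, and form the shifted sequences
\begin{equation*}
  \widetilde{\lambda}_n := \lambda_{n-K}, \qquad \widetilde{\gamma}_n := \gamma_{n-K} \prod_{m=0}^{\min\{n,K\}-1} (\lambda_{n-K} - \lambda_{m-K}),
\end{equation*}
exactly as before. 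By construction these satisfy~(\ref{eq:increasing}) and~(\ref{eq:asymptotics}) with $M, N$ replaced by $M + K = -1$, so by Theorem~\ref{thm:by_spectral_data} (equivalently \cite[Theorem~4.7]{G17}) there is a unique boundary value problem $\mathscr{P}(\widetilde{q}, \widetilde{f}, \widetilde{F})$ realizing them. Setting $(q, f, F) := \widehat{\mathbf{T}}^K (\widetilde{q}, \widetilde{f}, \widetilde{F})$ and invoking Theorem~\ref{thm:transformation} $K$ times shows that $\mathscr{P}(q, f, F)$ has spectrum $\{\lambda_n\}_{n\ge0}$.

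The crux, and the only genuinely new point compared with the previous subsection, is \emph{symmetry}: I must show that this $\mathscr{P}(q, f, F)$ can be arranged to be symmetric, i.e.\ of the form $\mathscr{P}(q, f, f)$ with $q(x) = q(\pi - x)$. There are two natural routes. The first, which the excerpt hints at, is to note that a problem is symmetric if and only if $\beta_n = (-1)^n$ for all $n$ (proved as in \cite[Subsection~4.5]{G17}), and that by~(\ref{eq:chi_beta_gamma}) together with the product representation~(\ref{eq:product}) this condition is determined purely by the eigenvalues and norming constants. Since the eigenvalues of the auxiliary problem $\mathscr{P}(\widetilde{q}, \widetilde{f}, \widetilde{F})$ satisfy~(\ref{eq:symmetric_asymptotics}) with $M = N$ at the level $-1$, and the norming constants were forced to satisfy the relation that makes $\widetilde\beta_n = (-1)^n$, Theorem~\ref{thm:symmetric}'s already-established case (i.e.\ \cite[Subsection~4.5]{G17}) gives that $\mathscr{P}(\widetilde q, \widetilde f, \widetilde F)$ is symmetric. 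The second route, which I find cleaner, is to verify directly that $\widehat{\mathbf{T}}$ intertwines the reflection $x \mapsto \pi - x$ with itself: if $(q, f, F)$ is symmetric then so is $\widehat{\mathbf{T}}(q, f, F)$, because under $x \mapsto \pi - x$ the roles of $\varphi$ and $\psi$ are exchanged, $\lambda_0$ is reflection-invariant, and the formulas~(\ref{eq:q_f_F_hat}) for $\widehat q$, $\widehat f$, $\widehat F$ are manifestly symmetric under the simultaneous swap $f \leftrightarrow F$, $x \leftrightarrow \pi - x$ (note $\ell_f$ and $\ell_F$ swap accordingly and $\widehat{\boldsymbol{\Theta}}(\lambda_0, \cdot)$ is applied to each endpoint separately). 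Hence applying $\widehat{\mathbf{T}}$ to a symmetric problem yields a symmetric problem, and $(q, f, F) = \widehat{\mathbf{T}}^K(\widetilde q, \widetilde f, \widetilde F)$ is symmetric provided $\mathscr{P}(\widetilde q, \widetilde f, \widetilde F)$ is, which we just arranged.

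For this second route one must still ensure that the auxiliary data are chosen symmetrically: one needs the auxiliary problem $\mathscr{P}(\widetilde q, \widetilde f, \widetilde F)$ itself to be symmetric, which by the already-proved regular case of Theorem~\ref{thm:symmetric} holds once the $\widetilde\gamma_n$ are the \emph{unique} positive reals making $\{\widetilde\lambda_n\}$, $\{\widetilde\gamma_n\}$ the spectral data of a symmetric problem. Concretely: the regular symmetric reconstruction of \cite[Subsection~4.5]{G17} takes a sequence satisfying~(\ref{eq:symmetric_asymptotics}) with some integer $M' \ge -1$ (here $M' = M + K = -1$) and produces $\mathscr{P}(\widetilde q, \widetilde f, \widetilde f)$ together with its norming constants $\widetilde\gamma_n$; I would simply \emph{define} the auxiliary $\gamma_{-K}, \ldots, \gamma_{-1}$ so that the shifted sequence $\widetilde\gamma_n$ agrees with these, which is possible because the product factors in the definition of $\widetilde\gamma_n$ are invertible. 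Then $\mathscr{P}(\widetilde q, \widetilde f, \widetilde f)$ is symmetric, $\widehat{\mathbf{T}}^K$ preserves symmetry, and the resulting $\mathscr{P}(q, f, f)$ has the prescribed spectrum. Uniqueness follows as in Theorem~\ref{thm:by_spectral_data}: any symmetric problem with spectrum $\{\lambda_n\}$ has norming constants determined by the $\beta_n = (-1)^n$ condition via~(\ref{eq:chi_beta_gamma}), hence by Theorem~\ref{thm:by_spectral_data} the pair $(\{\lambda_n\}, \{\gamma_n\})$ is determined, hence so is $(q, f, F)$. The main obstacle is checking the symmetry-preservation of $\widehat{\mathbf{T}}$ cleanly — i.e.\ making sure the reflection genuinely swaps left and right regular solutions and that the added potential terms $2(\ell_f+1)/x^2$ and $2(\ell_F+1)/(\pi-x)^2$ transform correctly — but this is a short computation once one unwinds the definitions.
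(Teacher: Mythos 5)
Your proposal is correct and follows essentially the same route as the paper: the paper likewise constructs the shifted eigenvalue sequence, invokes the regular symmetric reconstruction of \cite[Subsection 4.5]{G17} to obtain a symmetric auxiliary problem $\mathscr{P}(\widetilde{q}, \widetilde{f}, \widetilde{f})$, applies $\widehat{\mathbf{T}}^K$ while noting that this transformation preserves symmetry, and mentions your first route (symmetry $\Leftrightarrow$ $\beta_n = (-1)^n$ combined with (\ref{eq:chi_beta_gamma})) as an alternative. The only cosmetic difference is that the paper sidesteps the auxiliary norming constants entirely by feeding the shifted eigenvalue sequence directly into the one-spectrum symmetric result, whereas you recover them a posteriori from that same reconstruction.
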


Let us now turn to Hochstadt--Lieberman-type results. For the sake of variety, here we transform every problem $\mathscr{P}(q, f, F)$ to a problem with singular endpoints rather than with boundary conditions dependent on the eigenvalue parameter. It follows from a result of Eckhardt and Teschl~\cite[Theorem 5.4]{ET13} that a problem $\mathscr{P}(q, f, F)$ with $\max \{ \ind f, \ind F \} \le -1$ is uniquely determined by its spectrum, the coefficient $f$, and the values of $q$ on $(0, \pi / 2 + \varepsilon)$ for some $\varepsilon > 0$, where one can also take $\varepsilon = 0$ if $\ind f \ge \ind F$.

So, let $\mathscr{P}(q, f, F)$ and $\mathscr{P}(\widetilde{q}, f, \widetilde{F})$ be two problems with $q(x) = \widetilde{q}(x)$ a.e. on $(0, \pi / 2 + \varepsilon)$ and the same eigenvalues $\lambda_n$, $n \ge 0$. The isospectrality of the boundary value problems $\mathscr{P}(q, f, F)$ and $\mathscr{P}(\widetilde{q}, f, \widetilde{F})$ and the asymptotics of their eigenvalues imply that $\ind F = \ind \widetilde{F}$. We assume without loss of generality that $K := 1 + \max \{ \ind f, \ind F \} \ge 1$, since the other case is already contained in the above-mentioned result. We set $(q^{(0)}, f^{(0)}, F^{(0)}) := (q, f, F)$ and $(\widetilde{q}^{(0)}, \widetilde{f}^{(0)}, \widetilde{F}^{(0)}) := (\widetilde{q}, f, \widetilde{F})$, and define the chains of problems $\mathscr{P}(q^{(k)}, f^{(k)}, F^{(k)})$ and $\mathscr{P}(\widetilde{q}^{(k)}, \widetilde{f}^{(k)}, \widetilde{F}^{(k)})$ inductively by $(q^{(k)}, f^{(k)}, F^{(k)}) := \widehat{\mathbf{T}} (q^{(k-1)}, f^{(k-1)}, F^{(k-1)})$ and $(\widetilde{q}^{(k)}, \widetilde{f}^{(k)}, \widetilde{F}^{(k)}) := \widehat{\mathbf{T}} (\widetilde{q}^{(k-1)}, \widetilde{f}^{(k-1)}, \widetilde{F}^{(k-1)})$ for $k = 1$, $2$, $\ldots$, $K$. Then obviously $f^{(k)} = \widetilde{f}^{(k)}$ and $\ind F^{(k)} = \ind \widetilde{F}^{(k)}$ for each $k$.

Denote the left regular solutions of the boundary value problems $\mathscr{P}(q^{(k)}, f^{(k)}, F^{(k)})$ and $\mathscr{P}(\widetilde{q}^{(k)}, \widetilde{f}^{(k)}, \widetilde{F}^{(k)})$ by $\varphi^{(k)}$ and $\widetilde{\varphi}^{(k)}$ respectively. Then using the definitions of the left regular solution and the transformation $\widehat{\mathbf{T}}$, for each $k = 0$, $1$, $\ldots$, $K-1$ we successively obtain $\varphi^{(k)}(x, \lambda_k) = \widetilde{\varphi}^{(k)}(x, \lambda_k)$ on $(0, \pi / 2 + \varepsilon)$ and thus $q^{(k+1)}(x) = \widetilde{q}^{(k+1)}(x)$ a.e. on $(0, \pi / 2 + \varepsilon)$. In particular, we have $f^{(K)} = \widetilde{f}^{(K)}$ and $q^{(K)}(x) = \widetilde{q}^{(K)}(x)$ a.e. on $(0, \pi / 2 + \varepsilon)$, and hence, by the result mentioned above, $(q^{(K)}, f^{(K)}, F^{(K)}) = (\widetilde{q}^{(K)}, \widetilde{f}^{(K)}, \widetilde{F}^{(K)})$.

We now observe that both $1 / \varphi^{(K-1)}(x, \lambda_{K-1})$ and $1 / \widetilde{\varphi}^{(K-1)}(x, \lambda_{K-1})$ coincide on the left half of the interval $(0, \pi)$ and satisfy the same differential equation on all of this interval. This yields $q^{(K-1)}(x) = \widetilde{q}^{(K-1)}(x)$ a.e. on $(0, \pi)$ and $F^{(K-1)} = \widetilde{F}^{(K-1)}$. Repeating this argument $K - 1$ more times concludes the proof of the following result.

\begin{theorem} \label{thm:partial}
  Let $\{ \lambda_n \}_{n \ge 0}$ and $\{ \widetilde{\lambda}_n \}_{n \ge 0}$ denote the eigenvalues of the problems $\mathscr{P}(q, f, F)$ and $\mathscr{P}(\widetilde{q}, f, \widetilde{F})$ respectively. If $q(x) = \widetilde{q}(x)$ a.e. on $(0, \pi / 2 + \varepsilon)$ for some $\varepsilon > 0$ and $\lambda_n = \widetilde{\lambda}_n$ for all $n \ge 0$, then $(q, f, F) = (\widetilde{q}, f, \widetilde{F})$. Moreover, one can also take $\varepsilon = 0$ if $\ind f \ge \ind F$.
\end{theorem}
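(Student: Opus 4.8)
The plan is to reduce to the Eckhardt--Teschl uniqueness result quoted above, which applies as soon as $\max\{\ind f, \ind F\} \le -1$, by applying the transformation $\widehat{\mathbf{T}}$ to both problems enough times to lower the indices at both endpoints to at most $-1$, propagating the equality of the data up the two resulting chains and then pushing it back down.

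First I would record the immediate consequences of the hypotheses: isospectrality together with the eigenvalue asymptotics~(\ref{eq:lambda}) of Theorem~\ref{thm:asymptotics} forces $\ind F = \ind \widetilde{F}$, and the left coefficients agree by assumption. Set $K := 1 + \max\{\ind f, \ind F\} \ge 1$ and define the two chains $(q^{(k)}, f^{(k)}, F^{(k)}) := \widehat{\mathbf{T}}(q^{(k-1)}, f^{(k-1)}, F^{(k-1)})$ and $(\widetilde{q}^{(k)}, \widetilde{f}^{(k)}, \widetilde{F}^{(k)}) := \widehat{\mathbf{T}}(\widetilde{q}^{(k-1)}, \widetilde{f}^{(k-1)}, \widetilde{F}^{(k-1)})$ for $k = 1, \ldots, K$, starting from $(q, f, F)$ and $(\widetilde{q}, f, \widetilde{F})$. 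Since $\widehat{\boldsymbol{\Theta}}(\mu, \cdot)$ is a formula in $\mu$ and the function it transforms, and by isospectrality and Theorem~\ref{thm:transformation} the eigenvalue removed at step $k$ equals $\lambda_{k-1}$ for both chains, an induction gives $f^{(k)} = \widetilde{f}^{(k)}$ and $\ind F^{(k)} = \ind \widetilde{F}^{(k)}$, hence also $\ell_{f^{(k)}} = \ell_{\widetilde{f}^{(k)}}$ and $\ell_{F^{(k)}} = \ell_{\widetilde{F}^{(k)}}$, for every $k$. The upward step is then: if $q^{(k)} = \widetilde{q}^{(k)}$ a.e.\ on $(0, \pi/2 + \varepsilon)$, the left regular solutions $\varphi^{(k)}(\cdot, \lambda_k)$ and $\widetilde{\varphi}^{(k)}(\cdot, \lambda_k)$ solve the same equation on $(0, \pi/2 + \varepsilon)$ with the same behaviour at the left endpoint and hence coincide there, and substituting this into the formula~(\ref{eq:q_f_F_hat}) for the transformed potential shows $q^{(k+1)} = \widetilde{q}^{(k+1)}$ a.e.\ on $(0, \pi/2 + \varepsilon)$. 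After $K$ iterations the level-$K$ problems have $\max\{\ind f^{(K)}, \ind F^{(K)}\} \le -1$, the same left coefficient, the same spectrum, and potentials agreeing a.e.\ on $(0, \pi/2 + \varepsilon)$, so the Eckhardt--Teschl theorem gives $(q^{(K)}, f^{(K)}, F^{(K)}) = (\widetilde{q}^{(K)}, \widetilde{f}^{(K)}, \widetilde{F}^{(K)})$.

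For the descent, assume $(q^{(k+1)}, f^{(k+1)}, F^{(k+1)}) = (\widetilde{q}^{(k+1)}, \widetilde{f}^{(k+1)}, \widetilde{F}^{(k+1)})$. Both $1/\varphi^{(k)}(\cdot, \lambda_k)$ and $1/\widetilde{\varphi}^{(k)}(\cdot, \lambda_k)$ solve the common equation~(\ref{eq:SL_hat}) of that level-$(k+1)$ problem at $\lambda = \lambda_k$ on all of $(0, \pi)$, and by the upward step they agree on $(0, \pi/2 + \varepsilon)$; uniqueness for a linear second-order ODE on $(0, \pi)$ then forces agreement on all of $(0, \pi)$, so $\varphi^{(k)}(\cdot, \lambda_k) = \widetilde{\varphi}^{(k)}(\cdot, \lambda_k)$ there, and feeding this back into~(\ref{eq:q_f_F_hat}) gives $q^{(k)} = \widetilde{q}^{(k)}$ a.e.\ on $(0, \pi)$, whence $\varphi^{(k)}(\cdot, \lambda) \equiv \widetilde{\varphi}^{(k)}(\cdot, \lambda)$ for all $\lambda$. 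It remains to deduce $F^{(k)} = \widetilde{F}^{(k)}$: if $\ind F^{(k)} \le -1$ this is automatic, since $F^{(k)}$ is then determined by its index; if $\ind F^{(k)} \ge 0$ the right endpoint is regular, and comparing the characteristic functions --- which coincide by the Hadamard product~(\ref{eq:product}), the spectra and index sums being equal --- written via the boundary values of $\varphi^{(k)}$ at $\pi$, one reads off $F^{(k)}_\uparrow = \widetilde{F}^{(k)}_\uparrow$ and $F^{(k)}_\downarrow = \widetilde{F}^{(k)}_\downarrow$, i.e.\ $F^{(k)} = \widetilde{F}^{(k)}$. Descending to $k = 0$ yields $(q, f, F) = (\widetilde{q}, f, \widetilde{F})$.

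Finally, the refinement with $\varepsilon = 0$ under $\ind f \ge \ind F$ holds because $\widehat{\mathbf{T}}$ lowers both indices by one, so this inequality is preserved along the chains; in particular $\ind f^{(K)} \ge \ind F^{(K)}$, which lets the Eckhardt--Teschl theorem be applied with $\varepsilon = 0$, and no other part of the argument uses $\varepsilon > 0$. I expect the descent to be the main obstacle --- in particular, establishing $F^{(k)} = \widetilde{F}^{(k)}$ rather than merely $\widehat{F}^{(k)} = \widehat{\widetilde{F}}^{(k)}$ (note that $\widehat{\boldsymbol{\Theta}}(\lambda_k, \cdot)$ fails to be injective when $\ind F^{(k)} = 0$), together with carefully matching all the $\ell$-dependent terms in~(\ref{eq:q_f_F_hat}) and~(\ref{eq:SL_hat}) so that the relevant differential equations genuinely coincide at each level.
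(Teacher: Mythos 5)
Your proposal is correct and follows essentially the same route as the paper: reduce via the chains of $\widehat{\mathbf{T}}$-transforms to the Eckhardt--Teschl setting $\max\{\ind f^{(K)},\ind F^{(K)}\}\le -1$, propagate the equality of the potentials on $(0,\pi/2+\varepsilon)$ upward through the left regular solutions, and then descend using the fact that $1/\varphi^{(k)}(\cdot,\lambda_k)$ and $1/\widetilde{\varphi}^{(k)}(\cdot,\lambda_k)$ solve the same equation on all of $(0,\pi)$ and agree on the left half. Your extra care in recovering $F^{(k)}=\widetilde{F}^{(k)}$ (via the characteristic functions) fills in a step the paper leaves implicit, and is sound.
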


\end{document}